\DeclarePairedDelimiter\floor{\lfloor}{\rfloor}
\newcolumntype{L}[1]{>{\raggedright\arraybackslash}m{#1}}
\newcolumntype{P}[1]{>{\centering\arraybackslash}p{#1}}
\newtheorem{theorem}{Theorem}
\thanks{This is the author’s version of the work. It is posted here for your personal use only. An extended abstract version of this work appears in the \textit{Proc.\@ of the 25th International Conference on Autonomous Agents and Multiagent Systems (AAMAS 2026)} \cite{Altmann26-DRIVE}.
% , \url{https://doi.org/10.1145/nnnnnnn.nnnnnnn}.
}
\title{Dynamic Incentivized Cooperation under Changing Rewards}
\author{Philipp Altmann}
\affiliation{\institution{LMU Munich, Germany}\country{}\city{}}
\email{philipp.altmann@ifi.lmu.de}
\author{Thomy Phan}
\affiliation{\institution{University of Bayreuth, Germany}\country{}\city{}}
\author{Maximilian Zorn}
\affiliation{\institution{LMU Munich, Germany}\country{}\city{}}
\author{Claudia Linnhoff-Popien}
\affiliation{\institution{LMU Munich, Germany}\country{}\city{}}
\author{Sven Koenig}
\affiliation{\institution{University of California, Irvine, USA}\country{}\city{}}
\affiliation{\institution{Orebro University, Sweden}\country{}\city{}}
\begin{abstract}
\emph{Peer incentivization} (PI) is a popular multi-agent reinforcement learning approach where all agents can reward or penalize each other to achieve cooperation in social dilemmas. 
Despite their potential for scalable cooperation, current PI methods heavily depend on fixed incentive values that need to be appropriately chosen with respect to the environmental rewards and thus are highly sensitive to their changes.
Therefore, they fail to maintain cooperation under \emph{changing rewards} in the environment, e.g., caused by modified specifications, varying supply and demand, or sensory flaws --- even when the conditions for mutual cooperation remain the same.
In this paper, we propose \emph{Dynamic Reward Incentives for Variable Exchange} (DRIVE), an adaptive PI approach to cooperation in social dilemmas with changing rewards. DRIVE agents reciprocally exchange reward differences to incentivize mutual cooperation in a completely decentralized way. 
We show how DRIVE achieves mutual cooperation in the general Prisoner's Dilemma and empirically evaluate DRIVE in more complex sequential social dilemmas with changing rewards, demonstrating its ability to achieve and maintain cooperation, in contrast to current state-of-the-art PI methods.
\\[4pt]
\textbf{Code:} \url{https://github.com/philippaltmann/DRIVE}
\end{abstract}
\keywords{Multi-Agent Reinforcement Learning, Emergent Cooperation, Peer Incentivization, Dynamic Rewards, Social Dilemmas}
\begin{document}

\pagestyle{fancy}
\fancyhead{}

%%% The next command prints the information defined in the preamble.

\maketitle

%%%%%%%%%%%%%%%%%%%%%%%%%%%%%%%%%%%%%%%%%%%%%%%%%%%%%%%%%%%%%%%%%%%%%%%%

\section{Introduction}

Many AI scenarios, such as autonomous driving \cite{shalev2016safe}, smart grids \cite{dimeas2010multi}, and IoT applications \cite{deng2020dynamical}, can be modeled as self-interested and online learning \emph{multi-agent systems (MAS)}, where conflicts arise due to opposing goals or shared resources \cite{bucsoniu2010multi}.
To maximize \emph{social welfare}, cooperation is essential in self-interested MAS, requiring agents to avoid selfish behavior for the collective good. This tension between individual and collective rationality is typically modeled as a \emph{social dilemma (SD)} \cite{rapoport1974prisoner,axelrod1984evolution}. Designing distributed mechanisms to incentivize cooperation in SDs remains a key challenge because it directly affects global efficiency (e.g., aggregated social welfare as defined in Sec.~\ref{subsec:problem_formulation}) and sustainability (e.g., long-run resource availability and fairness across agents) \cite{axelrod1984evolution,trivers1971evolution,dafoe2020open,perolat2017learning}.
\emph{Multi-agent reinforcement learning (MARL)} is a widely used framework to train rational agents in SDs and temporally extended \emph{sequential SDs (SSD)}, where each agent maximizes its own reward \cite{leibo2017multi,perolat2017learning,foerster2018learning,yang2020learning}. 
\begin{figure}\centering
\includegraphics[width=0.96\linewidth]{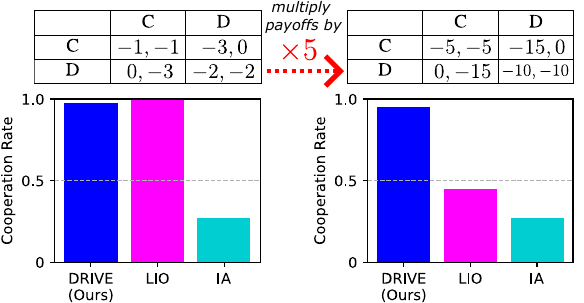}
\caption{Motivating example on the reward sensitivity of PI approaches in a 2-player Prisoner’s Dilemma: IA requires careful tuning even under the original payoffs and fails to achieve cooperation in both scenarios. LIO achieves high cooperation under the original scale (left) but degenerates when the payoffs change, even though the inequalities for greed and fear (Eq.~\ref{eq:PD_inequalities}) still hold. In contrast, DRIVE maintains robust cooperation across both settings without retuning.}\label{fig:intro_figure}
\Description{Motivating example on the reward sensitivity of PI approaches in a 2-player Prisoner’s Dilemma: IA requires careful tuning even under the original payoffs and fails to achieve cooperation in both scenarios. LIO achieves high cooperation under the original scale (left) but degenerates when the payoffs change, even though the inequalities for greed and fear (Eq.~\ref{eq:PD_inequalities}) still hold. In contrast, DRIVE maintains robust cooperation across both settings without retuning.}
\end{figure}
% \\[2pt]
\newpage
However, non-cooperative game theory and empirical studies show that naive MARL approaches often fail to sustain cooperation due to independent adaptation leading to mutual defection \cite{axelrod1984evolution,van2013psychology}.
Further, policy optimization can be affected by inconsistencies in the reward function, summarized as \emph{changing rewards}. 
Such variations are common when moving from simulation to reality, where abstract reward proxies are replaced by physical signals. 
Noisy sensors, hardware degradation, shifting specifications, or changing market demands can likewise alter reward scales during training \cite{dulac2019challenges}. 
In human-in-the-loop settings, evaluators may also adapt feedback online, leading to abrupt or irregular shifts \cite{knox2009interactively}. 
In these cases, the underlying SSD structure (i.e., greed and fear inequalities) is preserved, but the numerical values fluctuate. 
Methods that are brittle to such changes struggle to sustain cooperation beyond narrow, hand-tuned conditions.
\\[2pt]
\emph{Peer incentivization} (PI) enables agents to reward or penalize each other to foster cooperation in SDs \cite{lupu2020gifting,yang2020learning}. It has gained traction for its success in complex SSDs and links to biology, economics, and social science \cite{vinitsky2021learning,schmid2021stochastic,koelle23ltp}. Yet, existing PI methods rely on fixed or domain-tuned incentive values and fail under changing rewards, even if cooperation conditions remain unchanged \cite{eysenbach2021maximum,zhang2020adaptive}.
As Fig.~\ref{fig:intro_figure} illustrates, prior PI methods might succeed in a given instance of the Prisoner’s Dilemma \cite{foerster2018learning,phanJAAMAS2024,yang2020learning}, but fail when payoffs shift or scale, despite satisfying the same \emph{greed} and \emph{fear} inequalities \cite{axelrod1984evolution,macy2002learning}. % % 
Notably, popular approaches such as \emph{Learning to Incentivize Other learning agents} (LIO) \cite{yang2020learning} or \textit{inequity aversion} (IA) \cite{hughes2018inequity} appear superficially capable of adapting to such variation, since they rely on learned or relative incentives. 
Yet, IA requires precise hyperparameter tuning to be effective and fails to promote cooperation in both the original (unscaled) payoff matrix of Fig.~\ref{fig:intro_figure} (left) and the uniformly scaled variant (right).
LIO achieves cooperation under the original payoffs but degrades under simple scaling, as its learned incentive magnitudes are tied to the absolute reward scale.
As a result, these methods either lose robustness under reward shifts or suppress valuable heterogeneity in agent performance. 
Addressing such shifts typically requires hyperparameter retuning, which is impractical for ad-hoc- or online learning scenarios.
\\[2pt]
To this end, we propose \emph{Dynamic Reward Incentives for Variable Exchange (DRIVE)}, an adaptive PI framework for SDs with changing rewards. 
By adapting incentives directly to \emph{reward differences} DRIVE sustains cooperation without retuning and without conflating legitimate variance with exploitative behavior.
We summarize our contributions as follows:
\\[2pt]
\begin{itemize}[left=4pt]
\setlength{\itemsep}{4pt}\setlength{\topsep}{20pt}
\item We introduce a reciprocal exchange mechanism to incentivize cooperation via reward differences.
\item We prove that DRIVE aligns incentives toward mutual cooperation in generalized Prisoner’s Dilemma games, remains invariant to reward shifts and scaling. 
\item We empirically demonstrate DRIVE’s robustness and superior cooperation performance in SSDs with changing rewards.
\end{itemize}
\vspace{8pt}

\section{Background}
\noindent\textbf{Problem Formulation. }\label{subsec:problem_formulation}
In this work, we focus on partially observable \emph{Markov games} $M = \langle \mathbb{D},\mathbb{S},\mathbb{A},\mathbb{P},\mathbb{U},\mathbb{Z},\Omega \rangle$, where $\mathbb{D} = \{1,...,n\}$ is a set of agents $i$, $\mathbb{S}$ is a set of states $s_{t}$ at time step $t$, $\mathbb{A} = \langle \mathbb{A}_{1}, ..., \mathbb{A}_{n} \rangle =  \langle \mathbb{A}_{i} \rangle_{i \in \mathbb{D}}$ is the set of joint actions $a_{t} = \langle a_{t,i} \rangle_{i \in \mathbb{D}}$, $\mathbb{P}(s_{t+1}|s_{t}, a_{t})$ is the transition probability, $\langle u_{t,i} \rangle_{i \in \mathbb{D}} = \mathbb{U}(s_{t},a_{t}) \in \mathbb{R}$ is the joint reward, $\mathbb{Z}$ is a set of local observations $z_{t,i}$ for each agent $i$, and $\Omega(s_{t+1}) = z_{t+1} = \langle z_{t+1,i} \rangle_{i \in \mathbb{D}} \in \mathbb{Z}^{n}$ is the subsequent joint observation. Each agent $i$ maintains a local \emph{history} $\tau_{t,i} \in (\mathbb{Z} \times \mathbb{A}_{i})^{t}$. $\pi_{i}(a_{t,i}|\tau_{t,i})$ is the action selection probability based on the individual \emph{policy} of agent $i$. In addition, we assume each agent $i$ to have a \emph{neighborhood} $\mathcal{N}_{t,i} \subseteq \mathbb{D} - \{i\}$ of other agents at every time step $t$, which is domain-dependent, as suggested in \cite{lupu2020gifting,yang2018mean}.
$\pi_{i}$ is evaluated with a \emph{value function} $V_{i}^{\pi}(s_{t}) = \mathbb{E}_{\pi}[G_{t,i}|s_{t}]$ for all $s_{t} \in \mathbb{S}$, where $G_{t,i} = \sum_{k=0}^{\infty} \gamma^{k} u_{t+k,i}$ is the individual and discounted \emph{return} of agent $i \in \mathbb{D}$ with discount factor $\gamma \in [0,1)$ and $\pi = \langle \pi_{j} \rangle_{j \in \mathbb{D}}$ is the \emph{joint policy} of the MAS. The goal of agent $i$ is to find a \emph{best response} $\pi_{i}^{*}$ with $V_{i}^{*} = max_{\pi_{i}}V_{i}^{\langle \pi_{i}, \pi_{-i} \rangle}$ for all $s_{t} \in \mathbb{S}$, where $\pi_{-i}$ is the joint policy \emph{without} agent $i$. 
In practice, the global state $s_{t}$ is not directly observable for any agent $i$ s.t. $V_{i}^{\pi}$ is approximated with local information, i.e., $\tau_{t,i}$ instead \cite{leibo2017multi,perolat2017multi,jaderberg2019human}.
To measure cooperation in the MAS, we define the \emph{social welfare} or \emph{utilitarian metric} $U = \sum_{i \in \mathbb{D}}\sum_{t=0}^{H-1} u_{t,i}$ as undiscounted sum of rewards.
\newpage

\noindent\textbf{Social Dilemmas. }\label{subsec:social_dilemmas}
\emph{Social dilemmas (SD)} are games where independently optimized policies $\pi_{i}$ fail to achieve globally optimal outcomes that maximize collective welfare. SDs are commonly studied in 2-player matrix games with two actions: $C$ (cooperate) and $D$ (defect), producing four possible payoffs (Fig.~\ref{fig:social_dilemma_matrix}): $R$ for mutual cooperation, $P$ for mutual defection, $T$ for exploiting the other, and $S$ for being exploited. A matrix game is a \emph{Prisoner's Dilemma (PD)} if the payoffs satisfy \cite{axelrod1984evolution,macy2002learning}: 
\begin{equation}\label{eq:PD_inequalities}
T > R > P > S
\end{equation}
Here, $T > R$ represents \emph{greed}, and $P > S$ \emph{fear}. In \emph{Iterated PDs (IPD)}, an additional condition holds: $2R > T + S$ \cite{axelrod1984evolution,macy2002learning}. Fig.~\ref{fig:PD_instance} shows an instance where $D$ is individually rational despite $C$ being socially optimal. As long as the inequalities are satisfied, the game's strategic nature remains invariant to exact payoffs \cite{axelrod1984evolution,rapoport1974prisoner}.
PDs and IPDs are particularly important SDs, as greed and fear often drive agents away from mutual cooperation despite its collective benefit, a phenomenon observed in both nature and human society \cite{axelrod1984evolution,dawkins2016selfish,rapoport1974prisoner}.
\\[2pt]
\emph{Sequential social dilemmas (SSD)} extend SDs by introducing temporal structure, modeled as stochastic games \cite{leibo2017multi,perolat2017learning}. SSDs allow more realistic scenarios where behavior is captured by policies, not atomic actions. These can still be mapped to matrix games by classifying policies as $C$ or $D$ and evaluating empirical payoffs \cite{leibo2017multi}, making core SD concepts applicable to SSDs.
\\[2pt]
\begin{figure}\hfill
\subfloat[Social dilemma payoffs\label{fig:canonical_game}]{
\begin{tabular}[b]{|P{3em}|P{3em}|P{3em}|}
\hline & $C$ & $D$ \\ \hline 
$C$ & $R$, $R$ & $S$, $T$ \\ \hline 
$D$ & $T$, $S$ & $P$, $P$ \\ \hline
\end{tabular}}\hfill
\subfloat[PD instance\label{fig:PD_instance}]{%
\begin{tabular}[b]{|P{3em}|P{3em}|P{3em}|}
\hline & $C$ & $D$ \\ \hline 
$C$ & $-1$, $-1$ & $-3$, $0$ \\ \hline 
$D$ & $0$, $-3$ & $-2$, $-2$ \\ \hline
\end{tabular}}\hfill
\caption{\textbf{(a)} Social dilemma payoff matrix with $R$, $P$, $T$, and $S$. In \emph{Prisoner's Dilemmas (PD)}, $T > R > P > S$ holds \cite{axelrod1984evolution,macy2002learning}. \textbf{(b)} A PD instance satisfying these inequalities.}\label{fig:social_dilemma_matrix}
\Description{\textbf{(a)} Social dilemma payoff matrix with $R$, $P$, $T$, and $S$. In \emph{Prisoner's Dilemmas (PD)}, $T > R > P > S$ holds \cite{axelrod1984evolution,macy2002learning}. \textbf{(b)} A PD instance satisfying these inequalities.}
\end{figure}
\noindent\textbf{Multi-Agent Reinforcement Learning. }\label{subsec:marl_background}
We consider decentralized (independent) learning, where each agent $i$ optimizes its policy $\pi_{i}$ using local data like $\tau_{t,i}$, $a_{t,i}$, $u_{t,i}$, and $z_{t+1,i}$ via \emph{reinforcement learning (RL)} \cite{tan1993multi,foerster2018learning,yang2018mean}. 
For methods with peer incentives, the shaped reward simply replaces $u_{t,i}$ in this trajectory.
\emph{Policy gradient RL} is a common method to approximate best responses $\pi_{i}^{*}$ \cite{lowe2017multi,foerster2018learning,yang2020learning}. 
A function approximator $\hat{\pi}_{i,\phi_{i}} \approx \pi_{i}^{*}$ is trained using gradient ascent on an estimate of $J = \mathbb{E}_{\pi}[G_{0,i}]$ \cite{williams1992simple}, where the policy gradient is approximated as \cite{sutton2000policy}:
\begin{equation}\label{eq:policy_gradients}
g = (G_{t,i} - b_{i}(s_{t}))\nabla_{\phi{i}} \textit{log} \hat{\pi}_{i,\phi_{i}}(a_{t,i}|\tau_{t,i})
\end{equation}
Here, $b_{i}(s_{t})$ is a state-dependent \emph{baseline}, typically approximated by a learned value function $\hat{V}_{i,\omega_{i}}(\tau_{t,i}) \approx V_{i}^{\hat{\pi}}(s_{t})$ \cite{foerster2018learning}. For simplicity, we omit parameters and write $\hat{\pi}_{i}$, $\hat{V}_{i}$.
Modern actor-critic methods add such variance-reduction baselines, entropy bonuses, or centralized critics, but still only optimize each agent's own return, which remains misaligned with social welfare in social dilemmas. Also, \emph{independent learning} introduces non-stationarity as agents adapt simultaneously \cite{laurent2011world,hernandez2017survey}, often driving overly greedy behavior and mutual defection unless rewards or incentives are modified.

\section{Dynamic Reward Incentivization}
We assume a decentralized MARL setting as formulated in Algorithm \ref{algorithm:MARL_DRIVE}, where at every time step $t$ 
each agent $i$ with history $\tau_{t,i}$, policy $\hat{\pi}_{i}$, and value function $\hat{V}_{i}$ observes its neighborhood $\mathcal{N}_{t,i}$ through its local observation $z_{t,i}$
and executes an action $a_{t,i} \sim\pi_{i}(\cdot|\tau_{t,i})$. 
DRIVE uses $\mathcal{N}_{t,i}$ only for the incentive exchange; the underlying policy does not receive additional messages beyond the environment observation.
The environment then transitions to a new state $s_{t+1} \sim \mathbb{P}(\cdot|s_{t}, a_{t})$ which is observed by each agent $i$ through an observation $z_{t+1,i}$ and a reward $\hat{u}_{t,i}$.
This is obtained by passing the environmental reward $u_{t,i}$ through an external and possibly unknown \emph{reward‑change function} $f_{\textit{mod}}$ (Alg.~\ref{algorithm:MARL_DRIVE}, l.~\ref{l:r_mod}), which simulates modified specifications, varying supply and demand, or sensor degradation \cite{eysenbach2021maximum,zhang2020adaptive}. 
In Sec.~\ref{sec:theoretical_analysis} we analyze an affine special case of $f_{\textit{mod}}$, applying a shared map varying by epoch.
All agents collect their respective \emph{experience tuple} $e_{t,i} = \langle \tau_{t,i}, a_{t,i}, \hat{u}_{t,i}, z_{t+1,i} \rangle$ for PI exchange and independent updates to $\hat{\pi}_{i}$ and $\hat{V}_{i}$ \cite{yang2020learning}. 
We assume each agent $i$ has a domain‑dependent neighborhood $\mathcal{N}_{t,i}\subseteq D \setminus \{i\}$ at each time step, as introduced in Sec.~\ref{subsec:problem_formulation}. Lines 13–20 in Alg.~\ref{algorithm:MARL_DRIVE} correspond to a short parallel communication phase in which neighboring agents exchange DRIVE requests and responses.
\begin{algorithm} % Alg1
\caption{Multi-Agent Learning with DRIVE}\label{algorithm:MARL_DRIVE}
\begin{algorithmic}[1]
\State Initialize parameters of $\hat{\pi}_{i}$ and $\hat{V}_{i}$ for all agents $i \in \mathbb{D}$
\For{epoch $m \leftarrow 1,E$}
\State Set $\overline{u}_i \leftarrow 0$ for all agents $i \in \mathbb{D}$ \Comment{Reset avg. reward}
\State Sample $s_{0}$ and set $\tau_{0,i}$ for all agents $i \in \mathbb{D}$
\For{time step $t \leftarrow 0,H-1$}
\For{agent $i \in \mathbb{D}$} \Comment{Independent decisions}
\State Observe local neighborhood $\mathcal{N}_{t,i}$
\State $a_{t,i} \sim \hat{\pi}_{i}(\cdot\text{\textbar}\tau_{t,i})$
\EndFor
\State $s_{t+1} \sim \mathbb{P}(\cdot\text{\textbar}s_{t}, a_{t})$ \Comment{Execute joint action $a_{t}$}
\State $\langle u_{t,i} \rangle_{i \in \mathbb{D}} \leftarrow \mathbb{U}(s_{t},a_{t})$ \Comment{Environmental rewards}
\State $\langle z_{t+1,i} \rangle_{i \in \mathbb{D}} \leftarrow \Omega(s_{t+1})$
\For{agent $i \in \mathbb{D}$} \Comment{Parallel communication}
\State $\hat{u}_{t,i} \leftarrow f_{\textit{mod}}(u_{t,i}, m)$ \Comment{Ext. reward change}\label{l:r_mod}
\State $\overline{u}_{i} \leftarrow (t\cdot\overline{u}_{i} + \hat{u}_{t,i})/(t+1)$ \Comment{Avg. reward}
\State $u^{\textit{DRIVE}}_{t,i} \leftarrow \textit{DRIVE}(\hat{V}_{i}, \mathcal{N}_{t,i}, \hat{u}_{t,i}, \overline{u}_{i})$ \Comment{Alg. \ref{algorithm:DRIVE}}
\State $e_{t,i} \leftarrow \langle \tau_{t,i}, a_{t,i}, u^{\textit{DRIVE}}_{t,i},  z_{t+1,i} \rangle$
\State Update $\tau_{t,i}$ to $\tau_{t+1,i}$ and store $e_{t,i}$
\EndFor
\EndFor
\For{agent $i \in \mathbb{D}$} \Comment{Independent updates}
\State Update $\hat{\pi}_{i}$ and $\hat{V}_{i}$ using all $e_{t,i}$ of epoch $m$
\EndFor
\EndFor
\end{algorithmic}
\end{algorithm}
\\[2pt]
\textbf{DRIVE Token Exchange. }
DRIVE uses a reciprocal incentive scheme, inspired by \citet{trivers1971evolution} and illustrated in Fig. \ref{fig:drive_protocol}, to exchange dynamic incentives for distributed reward shaping.
In the \emph{request phase} (Fig. \ref{fig:drive_request}), each agent $i$ checks its \emph{advantage} or \emph{temporal difference residual} $\textit{TD}_{i}(\hat{u}_{t,i})$ \cite{sutton1988learning}:
\begin{equation}\label{eq:mi_td}
\textit{TD}_{i}(\hat{u}_{t,i}) = \hat{u}_{t,i} + \gamma \hat{V}_{i}(\tau_{t+1,i}) - \hat{V}_{i}(\tau_{t,i})
\end{equation}
If the advantage is non-negative, agent $i$ sends its reward $\hat{u}_{t,i}$ as a \emph{request} to all other agents $j \in \mathcal{N}_{t,i}$.
In the \emph{response phase} (Fig. \ref{fig:mate_response}), each request‑receiving neighbor $j \in \mathcal{N}_{t,i}$ compares the request $\hat{u}_{t,i}$ to its epoch‑average reward $\overline{u}_j$ and computes $\Delta_{t,1,2} = \overline{u}_2 - \hat{u}_{t,1}$, which is then sent back as a response.
After this exchange, the DRIVE reward $u^{\textit{DRIVE}}_{t,i}$ is computed for each agent $i$ as follows:
\begin{equation}\label{eq:DRIVE_reward}
u^{\textit{DRIVE}}_{t,i} = \hat{u}_{t,i} - \textit{min}\{\langle \Delta_{t,j,i} \rangle_{j \in \mathcal{N}_{t,i}}\} + \textit{min}\{\langle \Delta_{t,i,j} \rangle_{j \in \mathcal{N}_{t,i}}\}
\end{equation}
Using the responder’s epoch‑average $\overline{u}_j$ rather than its instantaneous reward makes DRIVE sensitive to systematic exploitation instead of single noisy outcomes: only agents whose recent average return is consistently lower than a neighbor’s request generate strong negative $\Delta$ terms. Other combinations (instant–instant, instant–average, average–average) are conceivable and may yield different trade‑offs, which is an interesting direction for future work.
Appendix~\ref{app:detailed_PD} provides an initial comparison.
\begin{figure}[t]\hfill
\subfloat[DRIVE request\label{fig:drive_request}]{\includegraphics[width=0.2\textwidth]{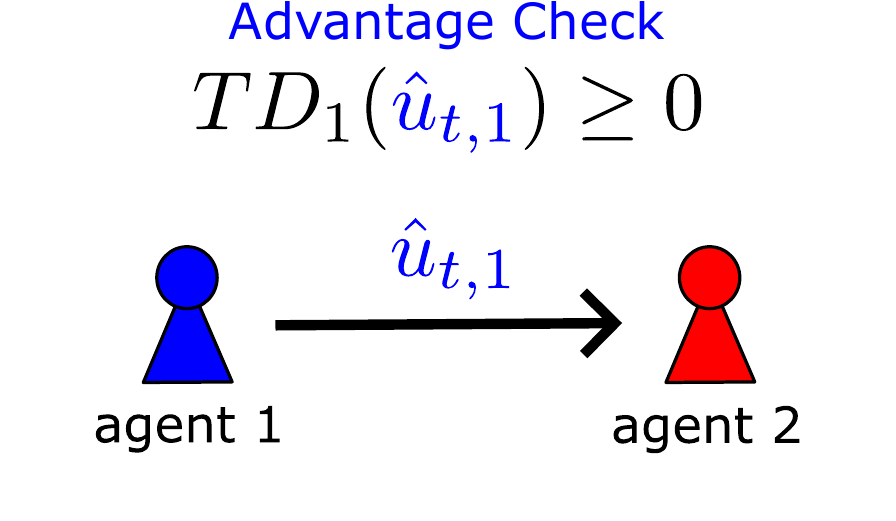}}\hfill
\subfloat[DRIVE response\label{fig:mate_response}]{\includegraphics[width=0.2\textwidth]{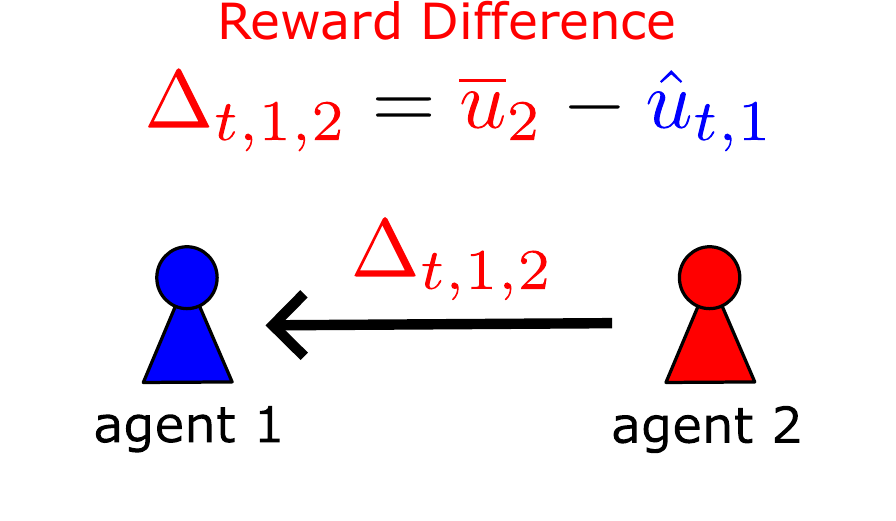}}\hfill
\caption{DRIVE exchange scheme. \textbf{(a)} If $\textit{TD}_1(\hat{u}_{t,1}) \geq 0$ (Eq.~\ref{eq:mi_td}), agent 1 sends its reward $\hat{u}_{t,1}$ to neighbor agent 2 as a request. \textbf{(b)} Agent 2 calculates the difference $\Delta_{t,1,2}$ between its own average reward $\overline{u}_i$ in the current epoch $m$ and the received request $\hat{u}_{t,i}$, sent back as a response and used to shape the rewards of both agents (Eq.~\ref{eq:DRIVE_reward}).}\label{fig:drive_protocol}
\Description{DRIVE exchange scheme. \textbf{(a)} If $\textit{TD}_1(\hat{u}_{t,1}) \geq 0$ (Eq.~\ref{eq:mi_td}), agent 1 sends its reward $\hat{u}_{t,1}$ to neighbor agent 2 as a request. \textbf{(b)} Agent 2 calculates the difference $\Delta_{t,1,2}$ between its own average reward $\overline{u}_i$ in the current epoch $m$ and the received request $\hat{u}_{t,i}$, sent back as a response and used to shape the rewards of both agents (Eq.~\ref{eq:DRIVE_reward}).}
\end{figure}
\begin{algorithm} % Alg 2
\caption{Dynamic Reward Incentive Exchange}\label{algorithm:DRIVE}
\begin{algorithmic}[1]
\Procedure{$\textit{DRIVE}(\hat{V}_{i}, \mathcal{N}_{t,i}, \hat{u}_{t,i}, \overline{u}_{i})$}{}
\State $\hat{u}_{\textit{req}} \leftarrow \infty$, $\hat{u}_{\textit{res}} \leftarrow 0$
\If{$\textit{TD}_{i}(\hat{u}_{t,i}) \geq 0$} \Comment{Evaluate acc. to Eq.~\eqref{eq:mi_td}}
\State Send request $\hat{u}_{t,i}$ to all $j \in \mathcal{N}_{t,i}$ (Fig. \ref{fig:drive_protocol}a)
\EndIf
\For{neighbor agent $j \in \mathcal{N}_{t,i}$}\Comment{Requests}
\If{request $\hat{u}_{t,j}$ received from $j$}
\State $\Delta_{t,j,i} \leftarrow \overline{u}_{i} - \hat{u}_{t,j}$
\State $\hat{u}_{\textit{req}} \leftarrow \textit{min}\{\hat{u}_{\textit{req}}, \Delta_{t,j,i}\}$
\State Send response $\Delta_{t,j,i}$ to agent $j$ (Fig. \ref{fig:drive_protocol}b)
\EndIf
\EndFor
\State \textbf{if} $\hat{u}_{\textit{req}} = \infty$ \textbf{then} $\hat{u}_{\textit{req}} \gets 0$ \Comment{If no requests received}
\If{$\textit{TD}_{i}(\hat{u}_{t,i}) \geq 0$}\Comment{If requests have been sent}
\State $\hat{u}_{\textit{res}} \leftarrow \infty$
\For{neighbor agent $j \in \mathcal{N}_{t,i}$}\Comment{Responses}
\State $\hat{u}_{\textit{res}} \leftarrow \textit{min}\{\hat{u}_{\textit{res}}, \Delta_{t,i,j}\}$
\EndFor
\EndIf
\State \Return $\hat{u}_{t,i} - \hat{u}_{\textit{req}} + \hat{u}_{\textit{res}}$ \Comment{Shaped reward (Eq.~\ref{eq:DRIVE_reward})}
\EndProcedure
\end{algorithmic}
\end{algorithm}
\\[2pt]\textbf{Distributed Reward Shaping. }
The DRIVE reward $u^{\textit{DRIVE}}_{t,i}$ is used to update the policies of the corresponding agents using any RL algorithm, e.g., policy gradient methods, according to Eq.~\eqref{eq:policy_gradients}.
Formally, whenever DRIVE is enabled we obtain returns $G_{t,i}$ by
replacing $u_{t,i}$ with $u^{\text{DRIVE}}_{t,i}$ in Eq.~\eqref{eq:policy_gradients} (cf., Alg.~\ref{algorithm:MARL_DRIVE}, l.~17). 
Sec.~\ref{subsec:problem_formulation} therefore describes the environment-level returns, while DRIVE defines how these are transformed into the shaped rewards that the RL updates optimize.
The non-negativity condition of advantage $\textit{TD}_{i}(\hat{u}_{t,i})$ in the DRIVE request is needed to expose defecting agents that typically have a greater advantage than the exploited agents in SDs \cite{axelrod1984evolution}.
If there is unilateral defective behavior, the defective agent $i$ will be penalized by the most exploited neighbor agent $j \in \mathcal{N}_{t,i}$, since $\Delta_{t,i,j} < 0$, which is ensured by the $\textit{min}$ aggregation terms\footnote{The $\textit{min}$ aggregations could be replaced by a sum or mean, similar to \cite{hughes2018inequity}. However, this would weaken the influence of the most exploited agent relative to others, thus tolerating individual dissatisfaction, which could spread in later epochs.} in Eq.~\eqref{eq:DRIVE_reward}.
However, if all agents act equally cooperative, then all $\Delta_{t,i,j} = 0$, and there is no additional reward or penalization.
The complete formulation of DRIVE at time step $t$ for any agent $i$ is given in Algorithm \ref{algorithm:DRIVE}. $\hat{V}_{i}$ is the approximated value function to calculate $\textit{TD}_{i}(\hat{u}_{t,i})$, according to Eq.~\eqref{eq:mi_td}, $\mathcal{N}_{t,i}$ is the current local neighborhood, $e_{t,i}$ is the experience tuple, and $\overline{u}_{i}$ is the current average reward of agent $i$.

\section{Related Work}

\textbf{MARL in Social Dilemmas. }
MARL has achieved substantial progress across a range of domains
\cite{tan1993multi,littman1994markov,bucsoniu2010multi,vinyals2019grandmaster}.
In decentralized SDs and SSDs, a key challenge is resolving misaligned
incentives without centralized control. Recent work addresses this through \emph{local}, peer-induced, or socially conditioned reward-shaping mechanisms \cite{hughes2018inequity,yang2020learning,phanJAAMAS2024,Altmann25-MEDIATE}, which operate in mixed-motive environments with observation-limited agents and thus align with our setting.
By contrast, much of the broader MARL literature --- methods for mitigating non-stationarity in cooperative settings \cite{bowling2002multiagent,matignon2007hysteretic,wei2016lenient}, globally informed reward shaping \cite{leibo2017multi,devlin2014potential}, opponent-shaping requiring access to opponents' parameters \cite{foerster2018learning,letcher2018stable,willi2022cola,zhao2022proximal}, and centralized incentive design \cite{yang2022adaptive,guresti2023iqflow} --- assumes global information or centralized coordination, and therefore does not extend to decentralized SDs.
\\[2pt]
\textbf{Peer Incentivization. }
Many PI methods have been introduced to promote mutual cooperation in a distributed fashion via reward exchange \cite{yi2022learning}. 
\emph{Gifting} extends the action space of each agent $i$ with a reward action to incentivize other agents $j \in \mathcal{N}_{t,i}$ \cite{lupu2020gifting}.
In contrast, DRIVE is built upon reciprocal reward shaping and does not require an extended action space.
\citet{schmid2021stochastic} proposes market-based PI where agents can establish bilateral agreements. A public sanctioning approach, where agents can reward or penalize each other based on known group behavior patterns, has been proposed in \cite{vinitsky2021learning}. \citet{hughes2018inequity} defines an \emph{inequity aversion (IA)} scheme based on non-negative reward differences weighted by two domain-dependent coefficients. 
\emph{Learning to Incentivize Other learning agents (LIO)} automatically learns an incentive function for each agent $i$ under full observability, based on the joint action of all other agents $j \neq i$ \cite{yang2020learning}. 
\citet{phanAAMAS22} propose Mutual Acknowledgment Token Exchange (MATE), a two‑phase protocol where agents exchange fixed tokens x: cooperating agents issue tokens while exploited agents receive them, yielding the modified payoff matrix shown in Fig. \ref{fig:payoff_tables_pi}.
Most PI methods are sensitive to the reward values of an environment due to relying on fixed incentive values or domain-dependent parameters, thus fail to cooperate when the reward values change (Fig. \ref{fig:intro_figure}).
Considering the PD as an example, Fig. \ref{fig:payoff_tables_pi} shows the modified payoff matrices of DRIVE, MATE, LIO, and IA. According to these matrices, DRIVE is the only method that does not depend on any particular hyperparameter or explicitly learned value. IA only modifies unilateral cooperation and defection but depends on two domain-dependent coefficients $\alpha$ and $\beta$ \cite{hughes2018inequity}. 
LIO learns an incentive function conditioning on the actions of the other agents, which requires sufficient experience and time to adapt accordingly. 
In case of MATE, the cooperation depends on an appropriate global token $\mathbf{x} \geq \textit{max}\{P-S, \frac{T-R}{3}\}$ that can be derived from the modified payoff matrix.
In the example instance in Fig. \ref{fig:PD_instance}, $\mathbf{x}$ should be at least 1, which is the original token value used for MATE in \cite{phanJAAMAS2024}. Interestingly, this value is used for other PI methods as well without further analysis or questioning \cite{lupu2020gifting,schmid2021stochastic}.
\\[2pt]
\textbf{Teaming. }
Beyond independent learners, recent work studies how social preferences, teams, or coalition structures shape incentives in mixed-motive settings, e.g., heterogeneous SVO-based reward shaping, team-based reward sharing, and adaptive reward mixing via price-of-anarchy minimization \cite{gemp2022d3c,radke2022exploring,mckee2020social,phan21vast}. 
These methods typically assume fixed or emergent group structures or modify each agent’s global utility function. In contrast, DRIVE introduces no explicit coalitions or shared rewards: incentives arise solely through local, bilateral reward exchanges, enabling decentralized alignment without predefined teams or population-level reward mixing.

\section{Theoretical Analysis}\label{sec:theoretical_analysis}
In the following, we analyze the incentive alignment properties of DRIVE in the general PD and its invariance to changing rewards. 
We emphasize that the mechanism does not guarantee convergence of learning dynamics, but rather reshapes incentives such that cooperation becomes the individually rational choice in repeated interactions. 
Although our formal discussion focuses on the canonical PD, many well-known sequential social dilemmas (SSDs), such as Coin and Harvest, instantiate PD-like incentive structures, as shown empirically in \cite{leibo2017multi} and explained in the Appendix. 
In Section~\ref{sec:results}, we further show that DRIVE also yields promising results in more complex SSDs with more than two agents.

\subsection{DRIVE in the General Prisoner's Dilemma}

In social dilemmas with a payoff table, as shown in Fig. \ref{fig:social_dilemma_matrix}, and inequalities $T > R > P > S$ of Eq.~\eqref{eq:PD_inequalities}, the DRIVE incentives do not change the payoffs $R$ for mutual cooperation and $P$ for mutual defection in the long run because the reward difference $\Delta_{t,i,j}$ would be zero for both agents.
With $R > P$, this will still favor mutual cooperation over mutual defection.
However, if there were (repeated) unilateral defection, the defective agent $i$ would send a request $T$ to the exploited agent $j$, which responds with a difference of $\Delta_{t,i,j} = S - T < 0$ in the worst case, where the reward of the defective agent $i$ would change to $T + \Delta_{t,i,j} = T + S - T = S$, while the reward of the exploited agent $j$ would change to $S - \Delta_{t,i,j} = S - (S - T) = T$, according to Eq.~\eqref{eq:DRIVE_reward}.
As shown in Fig. \ref{fig:payoff_tables_pi}, the DRIVE payoff switch of $T$ and $S$ enables both agents to overcome greed and fear, therefore incentivizing cooperation.

\begin{theorem}\label{theorem:drive_cooperation}
DRIVE aligns incentives in a generalized two-agent Prisoner’s Dilemma, making mutual cooperation a dominant strategy for both agents by reversing the temptation and sucker payoffs.
\end{theorem}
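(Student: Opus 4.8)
The plan is to reduce the theorem to a finite case analysis over the four joint outcomes of the stage game and then read off dominance from the resulting shaped payoff matrix. Because the statement concerns two agents, each neighborhood $\mathcal{N}_{t,i}$ is a singleton, so the $\min$ aggregations in Eq.~\eqref{eq:DRIVE_reward} are trivial and the shaped reward collapses to $u^{\textit{DRIVE}}_{t,i} = \hat{u}_{t,i} - \hat{u}_{\textit{req}} + \hat{u}_{\textit{res}}$ with a single difference term $\Delta_{t,i,j} = \overline{u}_j - \hat{u}_{t,i}$ on each side. First I would compute the shaped payoff of each agent in each of the outcomes $(C,C)$, $(D,D)$, $(C,D)$, and $(D,C)$, keeping $T,R,P,S$ symbolic so that the argument holds for all payoffs with $T>R>P>S$.

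For the two diagonal outcomes I would show that no net reshaping occurs. Under sustained mutual cooperation both agents earn $R$ and have epoch-average $\overline{u}_i = R$, so every difference term equals $\Delta = R - R = 0$; symmetrically, under mutual defection every $\Delta = P - P = 0$. Regardless of whether a request is actually issued (i.e.\ of the sign of $\textit{TD}_i$ in Eq.~\eqref{eq:mi_td}), both the subtracted and the added terms vanish, leaving the payoffs $(R,R)$ and $(P,P)$ unchanged, so that $R>P$ continues to favor mutual cooperation over mutual defection.

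The crux is the unilateral-defection outcome, say $(D,C)$, where the defector $i$ earns $\hat{u}_{t,i}=T$ and the exploited cooperator $j$ earns $\hat{u}_{t,j}=S$. Here I would fix the request pattern from the sign of the advantage: the defector, earning more than its value estimate, has $\textit{TD}_i\geq 0$ and sends a request carrying $T$, whereas the victim, earning less than expected, has $\textit{TD}_j<0$ and stays silent. In the worst case of persistent exploitation the victim's epoch-average is $\overline{u}_j=S$, so its response is $\Delta_{t,i,j}=S-T<0$. Substituting into the reduced reward gives $u^{\textit{DRIVE}}_{t,i}=T+(S-T)=S$ for the defector (who receives this response as $\hat{u}_{\textit{res}}$) and $u^{\textit{DRIVE}}_{t,j}=S-(S-T)=T$ for the victim (who sent it, so it enters as $\hat{u}_{\textit{req}}$), i.e.\ the temptation and sucker payoffs are exactly swapped. \textbf{The main obstacle is exactly this step}: one must pin down the $\textit{TD}$-sign conditions that decide who requests and justify the worst-case choice $\overline{u}_j=S$. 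I would handle it by assuming the value functions track the long-run returns of the prevailing play, so that a one-sided defector has non-negative advantage while its victim has non-positive advantage, and I would remark that any $\overline{u}_j\in[S,R]$ produces a partial reversal that still weakly preserves the conclusion, with $\overline{u}_j=S$ giving the strict version.

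Finally I would assemble the reshaped matrix, whose off-diagonal entries now read $(T,S)$ for $(C,D)$ and $(S,T)$ for $(D,C)$, and check dominance directly. Holding the opponent at $C$, cooperation yields $R$ against $S$ for defection, and $R>S$; holding the opponent at $D$, cooperation yields $T$ against $P$ for defection, and $T>P$. Both inequalities are immediate from $T>R>P>S$, so $C$ strictly dominates $D$ for each agent and mutual cooperation is the unique dominant-strategy outcome, which is what the theorem asserts.
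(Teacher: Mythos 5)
Your main argument is correct and follows essentially the same route as the paper: the same four-profile case analysis, the same TD-gating argument establishing that only the defector sends a request, the same steady-state choice $\overline{u}_j = S$ under persistent exploitation yielding the $T \leftrightarrow S$ swap, and the same final dominance check ($R > S$ against $C$, $T > P$ against $D$). This matches the paper's proof and, in more detail, its worked single-exchange derivation in Appendix~\ref{app:detailed_PD}.

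One caveat: your side remark that any $\overline{u}_j \in [S,R]$ ``weakly preserves the conclusion'' is false in general. With $\overline{u}_j = c$, the shaped payoffs in the unilateral-defection cell become $c$ for the defector and $S + T - c$ for the victim, so dominance of $C$ against a defector requires $S + T - c \geq P$; for $c$ near $R$ this forces $T - R \geq P - S$, which the PD inequalities do not imply. For example, $(T,R,P,S) = (5,\, 4.9,\, 1,\, 0)$ with $c = 4.9$ leaves the victim with $0.1 < P = 1$, so cooperation is no longer a best response to defection. The conclusion genuinely needs the steady-state value $\overline{u}_j = S$ (or at least $c \leq \min\{R,\, S+T-P\}$), which your main argument --- like the paper's --- correctly assumes; I would simply drop or qualify that remark.
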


\begin{proof}
Using the reward difference $\Delta_{t,1,2} = \overline{u}_2 - \hat{u}_{t,1}$ to shape the rewards of both agents (Eq.~\ref{eq:DRIVE_reward}) results in a switch of the payoffs $T$ and $S$ in the original PD payoff matrix in Fig. \ref{fig:canonical_game}. According to the resulting payoff matrix in Fig. \ref{fig:payoff_tables_pi}, DRIVE agents are able to overcome greed, since $R > S$, as well as fear, since $T > P$. Thus, DRIVE agents are always incentivized to cooperate in the PD.
\end{proof}

Overall, DRIVE does not rely on fixed incentive values or domain-dependent parameters. Thus, it can cope with reward changes that still satisfy the PD inequalities w.r.t. greed and fear, as in the example of Fig. \ref{fig:intro_figure}.
For clarity, Appendix~\ref{app:detailed_PD} provides an extended example that concretely demonstrates the shaping mechanism in a single PD exchange step.
Note that Eq.~\eqref{eq:DRIVE_reward} effectively turns the stage game into a coordination game with $(C,C)$ as the unique Nash equilibrium. 
This reflects standard reward-design practice: instead of relying on learners to infer long-term externalities, we adjust instantaneous payoffs so that cooperation is individually rational while leaving the environment dynamics unchanged.
\begin{figure}[t]\centering
\includegraphics[width=0.4\textwidth]{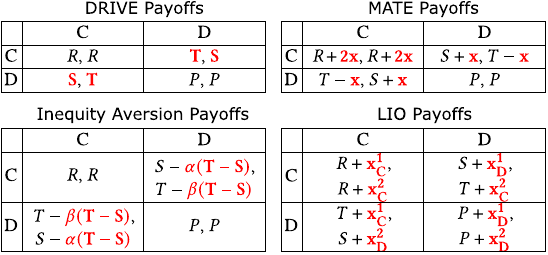}
\caption{Modified PD payoff matrices of different PI methods with payoff modifications highlighted in red \cite{hughes2018inequity,yang2020learning,phanJAAMAS2024}.}\label{fig:payoff_tables_pi}
\Description{Modified PD payoff matrices of different PI methods with payoff modifications highlighted in red \cite{hughes2018inequity,yang2020learning,phanJAAMAS2024}.}
\end{figure}

\subsection{Invariance to Changing Rewards}
We now analyze DRIVE and MATE w.r.t.\ their invariance to changing rewards in the PD, i.e., where the original payoffs $R$, $P$, $T$, and $S$ are dynamically altered by some external change function $f_{\textit{mod}}$.

\begin{definition}[Reward Change Function]\label{def:reward_change}
Let $r_{t,i}$ denote the original environmental reward of agent $i$ at time $t$.  
The modified reward in epoch $m$ is defined as $r'_{t,i} = f_{\textit{mod}}(r_{t,i}, m) = c_m r_{t,i} + b_m$, where $c_m > 0$ and $b_m \in \mathbb{R}$ are epoch-dependent scaling and shifting parameters that apply uniformly to all agents and timesteps within epoch $m$.  
This shared affine map preserves the strategic structure of the game (e.g., PD inequalities) while shifting reward magnitudes and offsets.
We focus on this broad but tractable class of per-epoch affine transformations: arbitrary schedules $c_m$ and $b_m$ cover all reward dynamics in Sec.~\ref{subsec:results_drift} and remain compatible with our per-epoch normalization scheme. 
Further illustrations and edge cases including epochs where $c_m \approx 0$ are provided in Appendix~\ref{app:affine_transformations}.
\end{definition}

\begin{theorem}\label{theorem:MATE_invariance}
MATE agents are not invariant to changing rewards in the general PD, where the environmental reward is altered by $f_{\textit{mod}}$.
\end{theorem}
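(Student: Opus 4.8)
The plan is to establish non-invariance by a direct constructive argument: I will exhibit an admissible reward change that renders the \emph{fixed} MATE token insufficient to induce cooperation, even though the PD inequalities are preserved throughout. The operative criterion I start from is the cooperation condition recalled in the Related Work section, namely that MATE promotes mutual cooperation only if its global token satisfies $\mathbf{x} \geq \max\{P-S, \tfrac{T-R}{3}\}$, a threshold derived from the MATE-shaped payoff matrix in Fig.~\ref{fig:payoff_tables_pi}. I treat this bound as the definition of ``MATE sustains cooperation,'' so that a violation of the bound witnesses a loss of the cooperation incentive and hence non-invariance.

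Next I would substitute the affine change $f_{\textit{mod}}(r,m)=c_m r + b_m$ from Definition~\ref{def:reward_change} into the four payoffs, obtaining $R'=c_m R + b_m$ and analogously for $P',T',S'$. The key observation is that the token threshold depends only on payoff \emph{differences}, so the additive shift cancels: $P'-S' = c_m(P-S)$ and $T'-R' = c_m(T-R)$. The transformed cooperation condition therefore reads $\mathbf{x} \geq c_m \max\{P-S, \tfrac{T-R}{3}\}$; that is, the required token scales linearly in the multiplicative factor $c_m$, while $b_m$ plays no role. This isolates the source of brittleness and makes precise that MATE is shift-invariant but not scale-invariant.

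The argument then closes by exploiting the fixity of $\mathbf{x}$. Since $R,P,T,S$ are fixed and $K := \max\{P-S,\tfrac{T-R}{3}\}>0$ by the PD inequalities (as $P>S$), any epoch scaling $c_m > \mathbf{x}/K$ violates the transformed bound, so the fixed token no longer dominates the rescaled temptation and sucker gaps. Because $c_m>0$ is otherwise unconstrained, such an epoch is admissible under $f_{\textit{mod}}$ and preserves $T'>R'>P'>S'$; hence MATE fails to maintain cooperation despite an unchanged strategic structure, which is exactly the claimed non-invariance.

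The main obstacle I anticipate is not the algebra, which is routine, but linking the failure of the threshold inequality to an actual behavioral failure rather than a mere shrinking of the cooperation margin. The token bound is precisely the requirement that $(C,C)$ remain the Nash equilibrium of the MATE-shaped stage game: the greed half $\mathbf{x}\geq\tfrac{T-R}{3}$ guarantees the defector's shaped payoff against a cooperator does not exceed the mutual-cooperation payoff, and the fear half $\mathbf{x}\geq P-S$ guarantees the exploited agent's shaped payoff is at least the mutual-defection payoff. I must therefore verify, reading off the shaped payoffs from Fig.~\ref{fig:payoff_tables_pi}, that once $c_m>\mathbf{x}/K$ at least one of these two comparisons flips, so that either greed or fear is genuinely reinstated in the rescaled MATE game. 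I would also treat the boundary case $c_m\approx 0$ separately (cf.\ Appendix~\ref{app:affine_transformations}), where the fixed token instead over-incentivizes relative to the compressed payoffs, reinforcing that no single fixed $\mathbf{x}$ can remain invariant across the full admissible range of $c_m$.
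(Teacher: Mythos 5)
Your proposal is correct and follows essentially the same route as the paper: both invoke the MATE cooperation threshold $\mathbf{x} \geq \max\{P-S, \tfrac{T-R}{3}\}$, apply the affine change, and observe that a sufficiently large scaling factor $c_m$ pushes the (shift-invariant, scale-sensitive) threshold above any fixed token. If anything, your version is slightly more careful than the paper's main-body proof (which loosely writes $f_{\textit{mod}}(P-S)$ rather than transforming payoffs individually), matching the cleaner treatment in the paper's Appendix~\ref{app:invariance_PD}, including the explicit bound $c_m > \mathbf{x}/K$ and the observation that $b_m$ cancels in differences.
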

\begin{proof}
According to the MATE payoff matrix in Fig. \ref{fig:payoff_tables_pi}, the global token $\mathbf{x} = x_1 = x_2 > 0$ must satisfy $\mathbf{x} \geq \textit{max}\{P-S, \frac{T-R}{3}\}$ for emergent cooperation. When the change function $f_{\textit{mod}}$ is chosen such that $f_{\textit{mod}}(P - S) > \mathbf{x}$ or $f_{\textit{mod}}(\frac{T-R}{3}) > \mathbf{x}$, then MATE agents are no longer guaranteed to cooperate mutually.
\end{proof}

\noindent Note that Theorem \ref{theorem:MATE_invariance} can be generalized to any other PI method that uses fixed peer incentive values.
One could in principle re‑tune $x$ after each change in $f_{\textit{mod}}$, but this requires additional global knowledge or meta‑optimization, whereas DRIVE adapts automatically through local reward differences.

\begin{theorem}\label{theorem:DRIVE_invariance}
DRIVE agents are invariant to changing rewards in the general PD, where the environmental reward is altered by $f_{\textit{mod}}$.
\end{theorem}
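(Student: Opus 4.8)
The plan is to mirror the payoff-level argument of Theorem~\ref{theorem:drive_cooperation}, but now tracking how every quantity DRIVE computes transforms under the shared affine map $f_{\textit{mod}}(\cdot,m) = c_m(\cdot) + b_m$ with $c_m > 0$ from Definition~\ref{def:reward_change}. First I would record the transformed payoffs $R' = c_m R + b_m$ and analogously $P'$, $T'$, $S'$. Since $c_m>0$ is a positive scaling and $b_m$ a common additive shift, the ordering is preserved, so $T' > R' > P' > S'$ still holds and the modified game remains a PD with the same greed and fear structure. The task then reduces to showing that DRIVE reproduces exactly the $T$--$S$ switch of Theorem~\ref{theorem:drive_cooperation} on these transformed payoffs, for \emph{every} admissible $c_m,b_m$.

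The heart of the argument is that DRIVE depends on rewards only through \emph{differences}, which are insensitive to the shift $b_m$ and sign-preserving under the positive scaling $c_m$. Because $f_{\textit{mod}}$ applies the same $c_m,b_m$ to every reward within epoch $m$, the epoch-average reward transforms affinely as well, $\overline{u}'_j = c_m\overline{u}_j + b_m$. Substituting into the reward difference underlying Eq.~\eqref{eq:DRIVE_reward} gives
\begin{equation}
\Delta'_{t,i,j} = \overline{u}'_j - \hat{u}'_{t,i} = c_m(\overline{u}_j - \hat{u}_{t,i}) = c_m\,\Delta_{t,i,j},
\end{equation}
so the shift cancels and only the positive factor $c_m$ survives. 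The request/response structure is likewise unaffected: the $\min$ aggregations commute with positive scaling, and the trigger $\textit{TD}_i(\hat{u}_{t,i}) \geq 0$ keeps its sign because a constant reward shift induces a matching constant shift in the value estimates that cancels in the temporal difference of Eq.~\eqref{eq:mi_td}.

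With this covariance established, I would replay the unilateral-defection case on the transformed payoffs. The worst-case response is $\Delta'_{t,i,j} = c_m(S - T) = S' - T' < 0$, so the defector's shaped reward becomes $T' + \Delta'_{t,i,j} = S'$ and the exploited agent's becomes $S' - \Delta'_{t,i,j} = T'$: exactly the $T'$--$S'$ switch of Theorem~\ref{theorem:drive_cooperation}. Hence $R' > S'$ overcomes greed and $T' > P'$ overcomes fear, so mutual cooperation remains dominant. Since this holds for every $c_m > 0$ and $b_m \in \mathbb{R}$, DRIVE's incentive alignment is invariant to $f_{\textit{mod}}$.

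I expect the request-triggering condition to be the one subtle point. At the static payoff-matrix level the argument is immediate, but justifying that $\textit{TD}_i$ retains its sign under $f_{\textit{mod}}$ relies on the learned value function having absorbed the affine reward transformation (a constant shift $b_m$ scales the discounted value and adds $b_m/(1-\gamma)$, which cancels in the residual up to the positive factor $c_m$). This is the only place where learning dynamics, rather than the difference-based shaping structure, enter the proof; everywhere else the invariance follows directly from the shift-cancellation and scale-covariance of reward differences.
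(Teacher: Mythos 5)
Your proof is correct and follows essentially the same route as the paper's: the affine map with $c_m > 0$ preserves the PD ordering $T' > R' > P' > S'$, and DRIVE's difference-based shaping reproduces the $T$--$S$ switch on the transformed payoffs, so cooperation remains dominant for every admissible $c_m, b_m$. Your explicit checks that $\Delta'_{t,i,j} = c_m\,\Delta_{t,i,j}$ and that the TD gate retains its sign are in fact slightly more careful than the paper's own argument, which defers the swap to its detailed two-player analysis and leaves the TD-gate behavior implicit, but the substance is identical.
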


\begin{proof}
Altering the original payoffs $R$, $P$, $T$, and $S$ by any $f_{\textit{mod}}$ would proportionally change the DRIVE payoff matrix in Fig. \ref{fig:payoff_tables_pi}. Thus, DRIVE agents are always incentivized to cooperate mutually as the inequalities to overcome greed and fear remain satisfied.
\end{proof}

We now extend the invariance analysis from the PD to general SSDs using policy gradient methods with return normalization, as detailed in Appendix~\ref{subsec:appendix_normalization}.

\begin{theorem}\label{theorem:DRIVE_invariance_SSDs}
DRIVE agents are invariant to changing rewards in SSDs, where environmental rewards are altered by $f_{\textit{mod}}$, when trained with normalized policy gradient methods (Lemma~\ref{lemma:scaling_invariance} in Appendix~\ref{subsec:appendix_normalization}).
\end{theorem}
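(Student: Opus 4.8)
The plan is to reduce the SSD case to the already-established affine behaviour of the DRIVE shaping rule and then let per-epoch return normalization absorb the transformation. Fix an epoch $m$, so that $f_{\textit{mod}}(u_{t,i},m) = c_m u_{t,i} + b_m$ with $c_m > 0$ applies uniformly, giving $\hat{u}_{t,i} = c_m u_{t,i} + b_m$ and, by linearity of the running mean in Alg.~\ref{algorithm:MARL_DRIVE}, $\overline{u}_i = c_m \overline{u}_i^{0} + b_m$, where the superscript $0$ denotes the quantity computed from the unmodified rewards. The first step is to observe that the additive offset cancels in every reward difference: $\Delta_{t,j,i} = \overline{u}_i - \hat{u}_{t,j} = c_m(\overline{u}_i^{0} - u_{t,j}) = c_m \Delta_{t,j,i}^{0}$. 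Because $c_m > 0$, the $\textit{min}$ aggregations in Eq.~\eqref{eq:DRIVE_reward} commute with the scaling, so both correction terms scale by $c_m$ and I obtain $u^{\textit{DRIVE}}_{t,i} = c_m u^{\textit{DRIVE},0}_{t,i} + b_m$. Hence the shaped reward is itself an affine image, with the \emph{same} $c_m,b_m$, of the shaped reward under the original payoffs.

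Next I would verify that the request/response \emph{structure} is preserved, since a different set of active requests could in principle change which $\Delta$ terms enter the $\textit{min}$. Assuming the critic tracks the affinely transformed return, i.e.\ $\hat{V}_i = c_m \hat{V}_i^{0} + b_m/(1-\gamma)$, a short computation shows that the constant contributions to the TD residual in Eq.~\eqref{eq:mi_td} cancel, leaving $\textit{TD}_i(\hat{u}_{t,i}) = c_m\,\textit{TD}_i^{0}(u_{t,i})$. As $c_m>0$, the sign of the advantage---and therefore the non-negativity test in line~3 of Alg.~\ref{algorithm:DRIVE} that decides whether agent $i$ issues a request---is invariant. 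Consequently the interaction graph of requests and responses is identical under both reward scales, which justifies the first step's implicit claim that the same $\Delta$ terms are selected by the $\textit{min}$ operators.

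Given the affine relation $u^{\textit{DRIVE}}_{t,i} = c_m u^{\textit{DRIVE},0}_{t,i} + b_m$, the discounted return inherits the same structure, $G_{t,i} = c_m G_{t,i}^{0} + b_m/(1-\gamma)$. The final step invokes Lemma~\ref{lemma:scaling_invariance}: per-epoch return normalization subtracts the batch mean and divides by the batch standard deviation, and an affine map $G \mapsto c_m G + d$ with $c_m>0$ leaves the normalized return unchanged, since the shift $d$ is removed by mean-centering and the factor $c_m$ cancels between the centred return and the standard deviation. Substituting the normalized return into the policy gradient of Eq.~\eqref{eq:policy_gradients} therefore yields exactly the same gradient estimator as under the original rewards, so the DRIVE-shaped learning dynamics are invariant to $f_{\textit{mod}}$.

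I expect the main obstacle to be the critic assumption used in the second paragraph: the clean sign-preservation of the TD residual presupposes that $\hat{V}_i$ has adapted to the current epoch's scale, whereas in practice the value function is only approximate and lags behind abrupt changes in $c_m,b_m$. I would address this by appealing to per-epoch normalization of the critic target as well, so that $\hat{V}_i$ is effectively trained on normalized returns and the sign test operates on scale-free advantages, and by noting that any transient mismatch affects only a bounded number of early steps within an epoch and vanishes as the running average $\overline{u}_i$ and the critic re-equilibrate.
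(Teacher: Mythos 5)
Your proposal is correct and follows essentially the same route as the paper's own proof (cf.\ Appendix~\ref{app:invariance_SSD}): you show that the shift $b_m$ cancels in every difference $\Delta_{t,i,j}$, that the $\min$ aggregation commutes with the positive scaling $c_m$, so that $u^{\textit{DRIVE}}_{t,i}$ undergoes the same per-epoch affine map as the raw reward, and you then invoke Lemma~\ref{lemma:scaling_invariance} to absorb that map via per-epoch return normalization. Your second paragraph --- verifying that the sign of the TD gate, and hence the set of requests and responses entering the $\min$ operators, is preserved under the transformation (assuming the critic tracks the transformed scale) --- is a detail the paper's proof leaves implicit, and it strengthens rather than deviates from the argument.
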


Consider sequential social dilemmas with returns normalized per epoch to zero mean and unit variance.  
If rewards are transformed by a positive affine map per epoch then both (i) standard policy gradient learning and (ii) DRIVE’s shaped rewards remain invariant with respect to the normalized return $\overline{G}_{t,i}$.  
Consequently, the incentive alignment induced by DRIVE persists under per-epoch reward scaling and shifting within this class of transformations.
Overall, under per-epoch normalization, both the base policy-gradient updates and DRIVE’s shaped rewards are invariant to shared per-epoch positive affine reward transformations (Lemma~\ref{lemma:scaling_invariance} in Appendix~\ref{subsec:appendix_normalization}).

\begin{proof}
The result follows from two components.  
First, Lemma~\ref{lemma:scaling_invariance} shows that standard policy gradient RL is invariant to positive affine reward transformations under per-epoch normalization.  
Second, DRIVE shapes rewards based on differences between epoch-average and instantaneous rewards. Applying $f_{\textit{mod}}$ gives $\Delta'_{t,i,j} = f_{\textit{mod}}(\overline{u}_i) - f_{\textit{mod}}(u_{t,i}) 
= f_{\textit{mod}}(\Delta_{t,i,j})$, so the shaped reward becomes $\hat{u}^{\textit{DRIVE}\,'}_{t,i} = f_{\textit{mod}}(u^{\textit{DRIVE}}_{t,i})$.
For $f_{\textit{mod}}(x) = c_m x + b_m$, normalization removes $c_m$ and $b_m$, leaving $\overline{G}_{t,i}$ unchanged.  
Hence, both the baseline policy gradient updates and DRIVE’s incentive effects are invariant to such transformations.
\end{proof}

\subsection{Effects of Non-Adherence to the Protocol}\label{subsec:non_adherence}

The preceding analysis assumes that all agents truthfully and synchronously follow the DRIVE protocol, i.e., each agent shares its epoch-average reward, responds to requests, and applies the shaping rule consistently. 
In practice, however, communication failures, delays, or strategic misreporting may occur. 
Here we analyze the impact of such \emph{non-adherence} compared to other PI approaches; for a detailed example, see Appendix~\ref{app:compliance}.
\\[2pt]
\textbf{Full compliance. }  
When all agents follow the protocol truthfully, the results of Theorem~\ref{theorem:drive_cooperation} and Theorems~\ref{theorem:DRIVE_invariance}--\ref{theorem:DRIVE_invariance_SSDs} apply.  
Payoffs $T$ and $S$ are swapped under unilateral defection, ensuring incentive alignment toward cooperation.
This reasoning can be generalized to SSDs beyond the PD, provided the characteristic inequalities for greed and fear are preserved.
\\[2pt]
\textbf{No requests sent (despite TD $\geq 0$). }  
An agent that withholds requests can still receive requests and respond truthfully.  
For its neighbors, this means one fewer reciprocal request. Due to the $\min$ aggregation, their penalization levels may increase slightly, but are not substantially affected as long as other agents send requests.  
For the non-requesting agent itself, $\hat u_{\textit{res}}=0$, so it forfeits the opportunity for reciprocal improvement and is indirectly penalized.  
If the agent defects, it may temporarily avoid penalization, but overall, it loses the benefits of mutual shaping and does not gain a stable advantage.  
\\[2pt]
\textbf{No responses sent. }  
If an agent does not respond to requests (due to communication loss or intentional withholding), the effect depends on the context:  
If the agent is defecting, withholding its negative $\Delta$ weakens the penalty on its exploited neighbor. However, it also loses the chance to benefit from reciprocity, so deliberate non-response is not rational.  
If the agent is cooperative, the missing (typically positive or small) $\Delta$ has limited effect because the $\min$ aggregation dampens single missing contributions.  
Thus, non-response is mostly neutral or slightly beneficial for defectors but costly overall.  
\\[2pt]
\textbf{False requests or responses. }  
Two types of misbehavior are possible:  
\emph{Sending requests with TD $<0$:} This can occur due to noise or early training misestimation. Such requests typically yield small or positive differences that are filtered out by the $\min$ operator and thus have little impact.  
\emph{Sending false responses:} An agent could deliberately misreport $\Delta$ to distort shaping. In the worst case, the $\min$ aggregation could impose unjustified penalties on requesters, potentially undermining cooperation. This depicts a fragility and limitation of DRIVE. 
\\[2pt]
\textbf{No compliance at all. }  
If all agents withhold or communication fails globally, then all shaping terms vanish. 
The system collapses to plain MARL without incentives, and mutual defection again becomes a possible equilibrium. 
This is similar to fixed-token methods like MATE, which also break down if their coordination mechanism is unavailable.

\subsection{Conceptual Discussion}\label{subsec:conceptual_discussion}
DRIVE offers a simple, adaptive PI mechanism for fostering cooperation in SDs with changing rewards. 
It operates through local peer interactions, requiring no central controller, global knowledge, or all-to-all communication. 
Similar to \cite{hughes2018inequity,yang2020learning,phanJAAMAS2024}, Algorithm~\ref{algorithm:DRIVE} scales linearly with $\mathcal{O}(4(n-1))$ in the worst case with respect to incentive exchanges.
At the same time, DRIVE relies on truthful peer communication to function correctly. 
Partial compliance leads to \emph{graceful degradation}: as long as at least one honest responder provides feedback, defectors are still penalized; if all agents fail to comply, DRIVE reduces to baseline MARL dynamics. 
Compared to fixed-token methods, DRIVE is at least as robust and often more adaptive, but its reliance on strict adherence to the exchange protocol introduces a new potential fragility. Designing protocol variants that are robust to partial compliance -- for example through redundant aggregation rules, stochastic auditing, or alternative shaping operators --- remains an important avenue for future work.
\\[2pt]
A key modeling assumption is the neighborhood $N_{t,i}$ that restricts which agents can exchange incentives. In our experiments, this neighborhood is tied to the environment as introduced in the subsequent section.
Crucially, DRIVE’s min-aggregation means a defector’s shaped reward depends only on its most exploited neighbor, so, unlike simple reward-sharing schemes, penalties do not dilute with larger populations. 
Appendix~\ref{app:compliance} extends the previous compliance analysis to general N-agent systems and arbitrary communication graphs, and formalizes the resulting robustness: defector penalization holds exactly when compliant agents form a dominating set.
Thus cooperation incentives degrade smoothly rather than collapsing abruptly under partial compliance.
\\[2pt]
As shown in the modified PD payoff tables in Fig. \ref{fig:payoff_tables_pi}, DRIVE only depends on the environmental rewards without requiring fixed incentive values $x$ as MATE, domain-dependent coefficients $\mathbf{\alpha}, \mathbf{\beta}$ as IA, or time-consuming incentive learning of $x^i_C, x^i_D$ as LIO. 
Note that given the characteristics of greed and fear exhibited by general PDs, the above theoretical analysis generally applies to various SDs.
Thus, DRIVE can be used in any SD, where rewards change dynamically through an external and unknown change function $f_{\textit{mod}}$. If the conditions for cooperation, i.e., characteristic inequalities of the SD, remain the same, DRIVE can adapt and maintain cooperation, as shown in Fig. \ref{fig:intro_figure} and later demonstrated.

\section{Experimental Setup\protect\footnotemark}\label{sec:experiments}
\footnotetext{Implementations are available at \url{https://github.com/philippaltmann/DRIVE}.}

We use three well-known SSDs based on \cite{foerster2018learning,perolat2017multi}. At every time step, the order of agent actions is randomized to resolve conflicts, e.g., when multiple agents step on a coin or tag each other simultaneously. All SSDs represent PD instances, as empirically shown in \cite{leibo2017multi} and Appendix~\ref{app:appendix_technical}. Since we modify the rewards by $f_{\textit{mod}}$, we assess the cooperation with domain-specific measures.
\\[2pt]
\textbf{Iterated Prisoner's Dilemma. } For IPD, we use the payoff matrix shown in Fig. \ref{fig:PD_instance}. Both agents observe the previous joint action $z_{t,i} = a_{t-1}$ at every time step $t$, which is the zero vector at the start state $s_{0}$. The Nash equilibrium is to always defect (DD). An episode consists of $H = 150$ iterations, and we set $\gamma=0.95$. The neighborhood $\mathcal{N}_{t,i} = \{j\}$ is defined by the other agent $j \neq i$. We measure the \emph{Cooperation Rate} as the cooperation (CC) count per episode divided by $H$.
\\[2pt]
\textbf{Coin. } \textit{Coin-2} and \textit{Coin-4} are SSDs and consist of $n \in \{2,  4\}$ agents with different colors, which start at random positions and have to collect a coin with a random color and position \cite{lerer2017maintaining,foerster2018learning}. If an agent collects a coin, it receives a reward of +1. However, if the coin has a different color than the collecting agent, another agent with the actual matching color is penalized with -2. After being collected, the coin respawns randomly with a new color.
All agents can observe the whole field and are able to move north, south, west, and east. Each agent can only determine if a coin has the same color as itself or not. An episode terminates after $H = 150$ time steps, and we set $\gamma=0.95$. The neighborhood $\mathcal{N}_{t,i} = \mathbb{D} - \{i\}$ is defined by all other agents $j \neq i$. We measure the \emph{``own coin" rate} $P(\textit{own coin}) = \frac{\textit{\# collected coins with same color}}{\textit{\# all collected coins}}$ based on the coins collected by each agent.
\\[2pt]
\textbf{Harvest. } \textit{Harvest-12} is an SSD and consists of $n = 12$ agents, starting at random positions and having to collect apples. The apple regrowth rate depends on the number of surrounding apples \cite{perolat2017multi}. 
If all apples are harvested, then no apple will grow anymore until the episode terminates.
At every time step, all agents receive a time penalty of -0.01. For each collected apple, an agent receives a reward of +1. All agents have a $7 \times 7$ field of view and are able to do nothing, move north, south, west, east, and tag other agents within their view with a tag beam of width 5 pointed to a specific cardinal direction. If an agent is tagged, it is unable to act for 25 time steps. Tagging does not yield any rewards. An episode terminates after $H = 250$ time steps, and we set $\gamma=0.99$.
The neighborhood $\mathcal{N}_{t,i}$ is defined by all other agents $j \neq i$ being in the field of view of agent $i$.
We measure the  \emph{Sustainability}, defined by the average number of time steps at which apples are collected.
\\[2pt]
\textbf{MARL Algorithms. }\label{subsec:marl_algorithms}
To isolate the impact of reward changes on PI mechanisms rather than the base RL algorithm, we implement all methods including DRIVE based on the standard policy gradient algorithm (Eq.~\ref{eq:policy_gradients}) with normalized returns (Appendix \ref{subsec:appendix_normalization}), as suggested in \cite{hessel2019multi}, which also serves as Naive Learning baseline without any PI-based reward shaping \cite{foerster2018learning}. 
We use LIO, MATE with $x = 1$ as suggested in \cite{phanJAAMAS2024}, and IA with $\alpha = 5$ and $\beta = 0.05$ as state-of-the-art PI baselines \cite{hughes2018inequity,yang2020learning,phanJAAMAS2024}.
For \textit{IPD} and \textit{Coin-2}, we directly include the performance of the opponent shaping techniques LOLA-PG and POLA-DiCE, as reported in \cite{foerster2018learning,zhao2022proximal}, due to the high computational demand of the second-order derivative calculation for deep neural networks.

\section{Experimental Results}\label{sec:results}

For each experiment, all respective algorithms were run 20 times over $E = 4,000$ epochs of 10 episodes to report the average progress and the 95\% confidence interval.

\subsection{Standard Setting -- Without Reward Changes}

\textbf{Setting. }
We compare DRIVE with the baselines in \textit{IPD}, \textit{Coin-2}, \textit{Coin-4}, and \textit{Harvest-12} without any reward change, i.e., $f_{\textit{mod}}(u,m) = u$ (Algorithm \ref{algorithm:MARL_DRIVE}, Line 15), in all epochs $m$.
\\[2pt]
\textbf{Results. }
The results are shown in Fig. \ref{fig:balance_results_star}. DRIVE achieves competitive and stable cooperation in \textit{IPD} compared with all baselines. In \textit{Coin-2}, DRIVE and MATE achieve the highest ``own coin" rate, significantly outperforming POLA-DiCE, which is more cooperative than LIO, IA, and LOLA-PG.
In both larger SSDs, namely \textit{Coin-4} and \textit{Harvest-12}, DRIVE and MATE achieve the highest level of cooperation, where DRIVE is slightly more cooperative in \textit{Coin-4}, while MATE is slightly more cooperative in \textit{Harvest-12}.

\begin{figure}[htb]\centering
\includegraphics[width=0.8\linewidth]{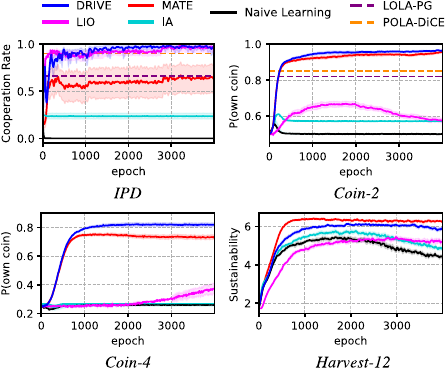}
\caption{Average progress of DRIVE and other baselines in SSDs without reward change. Shaded areas show the 95\% confidence interval. The results of LOLA-PG and POLA-DiCE are from \cite{foerster2018learning,zhao2022proximal}.}\label{fig:balance_results_star}
\Description{Average progress of DRIVE and other baselines in SSDs without reward change. Shaded areas show the 95\% confidence interval. The results of LOLA-PG and POLA-DiCE are from \cite{foerster2018learning,zhao2022proximal}.}
\end{figure}

\subsection{Dynamic Reward Changes}\label{subsec:results_drift}
\textbf{Setting. }
Next, we evaluate DRIVE and the baselines with different reward change functions $f_{\textit{mod}}$, as visualized in Fig. \ref{fig:drift_functions}:
\begin{enumerate}
\item \textbf{Linear increase}: $\> \> \> \> \> \> \> \> f^{I}_{\textit{mod}}(u,m) = u(\eta m + 1)$
\item \textbf{Exponential decay}: $ \> \>f^{II}_{\textit{mod}}(u,m) = ue^{-\eta m}$
\item \textbf{Stepwise increase}: $\> \> \> f^{III}_{\textit{mod}}(u,m) = u(\floor{\eta m} + \chi)$
\item \textbf{Damped cos.}: $f^{IV}_{\textit{mod}}(u,m) = \eta + u(1 - \frac{m}{E})\textit{cos}^2(2\eta m)$
\end{enumerate}
We set $\eta = 0.001$ and $\chi = 10$. We choose these functions to present different scenarios where the reward scale or shift either increases monotonically or converges to zero in the limit without altering the underlying inequalities of the SDs.
(1) could exemplify an external factor like increasing demands on the overall system. 
(2) could, e.g., simulate an internal factor like sensory wear or overall hardware decay over time. 
(3), on the other hand, demonstrates larger steps of change, as typically observed for shifts between simulations or simulation and reality.
Finally, (4) could be interpreted as the irregularities caused by human feedback or specifications being modified online.
\begin{figure}[t]\centering
\includegraphics[width=0.9\linewidth]{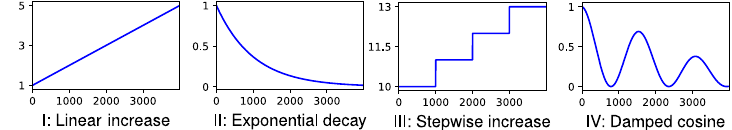}
\caption{The reward change functions $f_{\textit{mod}}$ used in each epoch $m$ according to Line 15 in Algorithm \ref{algorithm:MARL_DRIVE}.}\label{fig:drift_functions}
\Description{The reward change functions $f_{\textit{mod}}$ used in each epoch $m$ according to Line 15 in Algorithm \ref{algorithm:MARL_DRIVE}.}
\end{figure}
\begin{figure*}[t]\centering
\includegraphics[width=0.82\textwidth]{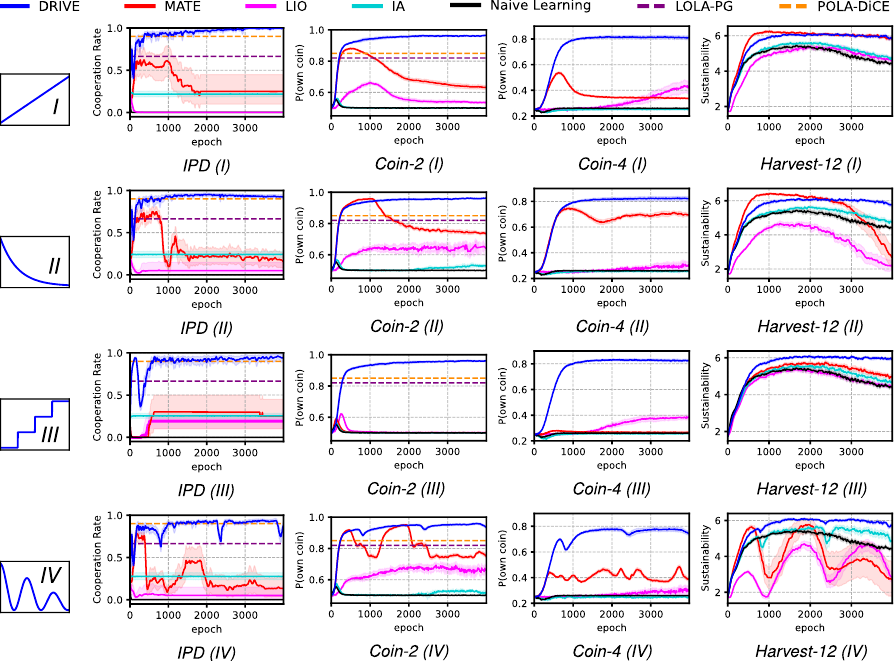}
\caption{Average progress of DRIVE and other baselines in SSDs with different reward change functions $f_{\textit{mod}}$ (I, II, III, and IV), as illustrated in the legend at the top and Fig. \ref{fig:drift_functions}. Shaded areas show the 95\% confidence interval. The results of LOLA-PG and POLA-DiCE are from \cite{foerster2018learning,zhao2022proximal}. The non-PI approaches Naive Learning, LOLA-PG, and POLA-DiCE are not affected by the reward changes of $f_{\textit{mod}}$ due to return normalization, as explained in Appendix \ref{subsec:appendix_normalization}.}\label{fig:balance_results_star_drift}
\Description{Average progress of DRIVE and other baselines in SSDs with different reward change functions $f_{\textit{mod}}$ (I, II, III, and IV), as illustrated in the legend at the top and Fig. \ref{fig:drift_functions}. Shaded areas show the 95\% confidence interval. The results of LOLA-PG and POLA-DiCE are from \cite{foerster2018learning,zhao2022proximal}. The non-PI approaches Naive Learning, LOLA-PG, and POLA-DiCE are not affected by the reward changes of $f_{\textit{mod}}$ due to return normalization, as explained in Appendix \ref{subsec:appendix_normalization}.}
\end{figure*}
\\[2pt]
\textbf{Results. }
The results are shown in Fig. \ref{fig:balance_results_star_drift}. The learning curve of DRIVE is not significantly affected by training with the functions $f^{I}_{\textit{mod}}$, $f^{II}_{\textit{mod}}$, and $f^{III}_{\textit{mod}}$. Function $f^{IV}_{\textit{mod}}$ causes occasional dips in the learning curve of DRIVE, which quickly recovers to its original level afterward.
The occasional dips are caused by the minima of the damped cosine function, where it is close to zero, which briefly nullifies the PD inequalities of Eq.~\eqref{eq:PD_inequalities}. However, this is a numeric issue rather than a conceptual one, as DRIVE quickly recovers afterward.
LIO is robust against any change function in \textit{Coin-4}, but its cooperation level significantly deteriorates in other domains. Despite automatically learning an incentive function, LIO is especially sensitive to $f^{IV}_{\textit{mod}}$. 
IA is affected in \textit{Coin-2} and \textit{Coin-4}, where it never outperforms Naive Learning. 
MATE is only able to resist reward changes in \textit{Harvest-12} when training with $f^{I}_{\textit{mod}}$.
In all other cases, the cooperation level of MATE significantly deteriorates, which is even worse than Naive Learning in some cases. The learned behavior is least stable with $f^{IV}$, where MATE is not able to recover from dips in the learning curve, unlike DRIVE.
DRIVE is the only PI method that outperforms Naive Learning, LOLA-PG, and POLA-DiCE in \textit{IPD} and \textit{Coin-2}.
Appendix~\ref{sec:additional_results} provides further cooperation metrics (social welfare, equality, sustainability, and peace) in \textit{Harvest-12}.

\section{Discussion}

We presented DRIVE, an adaptive PI approach to achieve and maintain cooperation in SDs with changing rewards. DRIVE agents reciprocally exchange reward differences to incentivize mutual cooperation in a fully decentralized manner.
\\[2pt]
Our theoretical analysis shows how DRIVE achieves incentive alignment toward mutual cooperation in the general Prisoner's Dilemma by overcoming greed and fear while remaining invariant to per-epoch reward shifts and scaling. This contrasts with prior PI methods that rely on fixed incentive values, domain-dependent coefficients, or time-consuming incentive learning. Consequently, DRIVE is better aligned with game-theoretic assumptions and safer to deploy in environments with drifting rewards.
\\[2pt]
Our experiments further demonstrate that DRIVE significantly outperforms state-of-the-art methods such as LIO, MATE, and IA, as well as opponent-shaping baselines like POLA-DiCE and LOLA-PG, across multiple SSDs with dynamic reward changes. The results confirm the sensitivity of existing PI approaches to reward magnitudes and highlight DRIVE’s ability to sustain cooperation under such transformations, provided the underlying strategic inequalities remain intact.
\\[2pt]
Despite these promising results, DRIVE relies on truthful and synchronous peer communication. Partial non-compliance leads to graceful degradation in performance. Moreover, the current theoretical analysis assumes homogeneous populations and complete or well-connected communication graphs. These assumptions may not hold in more realistic settings with strategic manipulation, communication noise, or sparse interaction structures.
\\[2pt]
Several directions for future work follow from these observations. One avenue is to develop mechanisms that are robust to partial or strategic non-compliance, for example, through redundant aggregation, stochastic auditing, or consensus systems \cite{Altmann25-MEDIATE}.
Extending the theoretical analysis to heterogeneous populations and more general network topologies represents another important step toward real-world applicability. 
A natural next step is to study learned and dynamic neighborhoods, and to combine DRIVE with models of evolving social structure and coalitions, as suggested by recent work on reward sharing in teams and coalitions.
Finally, incorporating agent identification could enable targeted bilateral responses, e.g., against adversarial peers.

%%% The next two lines define, first, the bibliography style to be 
%%% applied, and, second, the bibliography file to be used.

\bibliographystyle{ACM-Reference-Format} 
\bibliography{DRIVE}\balance

\clearpage\onecolumn\appendix
\section{Extended Theoretical Results}\label{app:extended_theory}

In this section, we provide additional worked-out examples and detailed arguments to complement the theoretical analysis in the main body. 
These results give a more concrete and intuitive understanding of how DRIVE operates in simple settings, how its behavior changes under different assumptions of protocol adherence, and how the microscopic request/response mechanism underpins the macroscopic equilibrium results, and provides a simple extension to larger multi-agent settings. 
\\[4pt]
\textbf{Preliminaries. }
We recall the standard two-player Prisoner’s Dilemma (PD) with actions $C$ (cooperate) and $D$ (defect) and payoffs $(R,R)$, $(P,P)$, $(T,S)$, and $(S,T)$ for profile $(C,C)$, $(D,D)$, $(D,C)$, and $(C,D)$, respectively (Fig.~\ref{fig:social_dilemma_matrix}). The game is a PD if $T > R > P > S$ (cf. Eq.~\ref{eq:PD_inequalities}), where $T>R$ encodes greed and $P>S$ encodes fear~\cite{axelrod1984evolution}.
We denote the environmental reward of agent $i$ at time $t$ by $u_{t,i}$, the externally modified reward in epoch $m$ by $\hat{u}_{t,i}=f_{\textit{mod}}(u_{t,i}, m)$, and the running average of $\hat{u}_{t,i}$ within epoch $m$ by $\bar{u}_i$.
The DRIVE shaping rule for agent $i$ at time $t$ is
\begin{equation}
u^{\mathrm{DRIVE}}_{t,i} = \hat{u}_{t,i} - \min_{j \in N_{t,i}}\{\Delta_{t,j,i}\} + \min_{j \in N_{t,i}}\{\Delta_{t,i,j}\},
\label{eq:DRIVE_reward_app}
\end{equation}
where $\Delta_{t,i,j} = \bar{u}_j - \hat{u}_{t,i}$ is the response sent from $j$ to $i$ when $i$ issues a request and the temporal difference gate TD$_i(\hat{u}_{t,i})$ in Eq.~\eqref{eq:mi_td} is non-negative. 
In the 2-agent PD we have $N_{t,1}=\{2\}$ and $N_{t,2}=\{1\}$ so the minima reduce to single terms.

\subsection{Single Exchange Mechanism Underlying Incentive Alignment in a 2-Player Prisoner’s Dilemma}\label{app:detailed_PD}

We give a concrete, single-step derivation showing how the DRIVE request/response protocol (Alg.~\ref{algorithm:DRIVE}) together with the TD gate in Eq.~(3) and the shaping rule in Eq.~(4) yields the \emph{payoff swap} $T \leftrightarrow S$ under unilateral defection.  
This turns the PD into a coordination game with $(C,C)$ as the unique Nash equilibrium.
Unless stated otherwise, we assume full protocol compliance.
\\[4pt]
\textbf{Setup. } 
Consider a single time step $t$ in a 2-agent PD with payoffs
$T>R>P>S$ and neighborhoods $N_{t,1}=\{2\}$,
$N_{t,2}=\{1\}$. Rewards may already be transformed by
$f_{\textit{mod}}$ in epoch $m$; for notational convenience we
write
$$
\hat{T} := f_{\textit{mod}}(T,m),\quad
\hat{R} := f_{\textit{mod}}(R,m),\quad
\hat{P} := f_{\textit{mod}}(P,m),\quad
\hat{S} := f_{\textit{mod}}(S,m).
$$
We assume that, in the long run under stationary play, epoch averages coincide with the corresponding instantaneous payoffs for each action profile (e.g., under $(C,C)$ both agents have $\bar{u}_i = \hat{R}$).
\\[4pt]
\textbf{Mutual cooperation and mutual defection. } 
If both agents cooperate at time $t$, each receives $\hat{u}_{t,1}=\hat{u}_{t,2}=\hat{R}$ and, in steady state, $\bar{u}_1=\bar{u}_2=\hat{R}$. 
Hence $\Delta_{t,1,2} = \bar{u}_2 - \hat{u}_{t,1} = 0$ and $\Delta_{t,2,1}=0$, so Eq.~\eqref{eq:DRIVE_reward_app} yields
$$
u^{\mathrm{DRIVE}}_{t,1}=\hat{R},\quad
u^{\mathrm{DRIVE}}_{t,2}=\hat{R}.
$$
A similar argument shows that under mutual defection both agents retain payoff $\hat{P}$. 
Thus DRIVE leaves $R$ and $P$ unchanged in the long run.
\\[4pt]
\textbf{Unilateral defection. }
Now consider a state where agent~1 defects ($D$) and agent~2 cooperates ($C$), so the instantaneous modified rewards are $\hat{u}_{t,1} = \hat{T}$, $\hat{u}_{t,2} = \hat{S}$. 
Under the usual PD ordering, the defector benefits from a higher reward than the cooperator. 
We assume that this yields a non-negative TD advantage for the defector and a non-positive one for the cooperator, so that
\begin{equation}
\mathrm{TD}_1(\hat u_{t,1}) \ge 0 \;\Rightarrow\; \text{agent 1 sends a request},\qquad
\mathrm{TD}_2(\hat u_{t,2}) < 0 \;\Rightarrow\; \text{agent 2 does not send}.
\end{equation}
This corresponds to the intended operating regime of DRIVE and matches the empirical behavior observed in our experiments.
Agent~1 sends its instantaneous reward $\hat u_{t,1}=\hat T$ to agent~2 as a request. Upon receiving this, agent~2 computes the (epoch-$m$) difference $\Delta_{t,1,2} \;=\; \overline{u}_2 - \hat u_{t,1}$.
In the PD step considered, the worst-case consistent choice is $\overline{u}_2 = \hat S$, yielding $\Delta_{t,1,2} \;=\; \hat S - \hat T \;<\; 0$.
Agent~2 returns this value to agent~1. Because only agent~1 sent a request, the reverse difference $\Delta_{t,2,1}$ is undefined in this step; by Algorithm~\ref{algorithm:DRIVE} it defaults to $0$ for aggregation when no responses arrive.
Using the responder’s epoch-average $\overline{u}_j$ rather than its instantaneous reward makes DRIVE sensitive to systematic exploitation instead of single noisy outcomes. Only agents whose recent average return is consistently lower than a neighbor’s request generate strong negative $\Delta$ terms. 
If both sides used instantaneous rewards, a single lucky or unlucky step could trigger large transfers, making incentives noisy and easy to game via short-term risk-taking. Using instantaneous rewards against averages would instead turn DRIVE into a risk-sharing mechanism, while comparing averages on both sides would be very stable but slow to react. The chosen average–instant design therefore balances robustness to noise with responsiveness to emerging exploitation.
In the 2-player case, the minima over neighbors reduce to single terms. The per-step shaped rewards are
\begin{align}
\hat u^{\mathrm{DRIVE}}_{t,1} 
&= \hat u_{t,1} \;-\; \underbrace{\min\{\Delta_{t,2,1}\}}_{=\,0} \;+\; \underbrace{\min\{\Delta_{t,1,2}\}}_{=\,\hat S-\hat T}
\;=\; \hat T + (\hat S-\hat T) \;=\; \hat S, \label{eq:defector_swaps_to_S}\\[4pt]
\hat u^{\mathrm{DRIVE}}_{t,2} 
&= \hat u_{t,2} \;-\; \underbrace{\min\{\Delta_{t,1,2}\}}_{=\,\hat S-\hat T} \;+\; \underbrace{\min\{\Delta_{t,2,1}\}}_{=\,0}
\;=\; \hat S - (\hat S-\hat T) \;=\; \hat T. \label{eq:cooperator_swaps_to_T}
\end{align}
Thus, the unilateral defection step $(\hat T,\hat S)$ is reshaped to $(\hat S,\hat T)$: the defector receives the sucker payoff and the cooperator the temptation payoff. 
Eqs.~(\ref{eq:defector_swaps_to_S}--\ref{eq:cooperator_swaps_to_T}) explicitly demonstrate how the TD gate together with the DRIVE shaping rule realizes the $T\leftrightarrow S$ swap in a single exchange step.
By symmetry, the profile $(C,D)$ is reshaped from $(\hat{S},\hat{T})$ to $(\hat{T},\hat{S})$.
\\[4pt]
\textbf{Resulting incentives and equilibrium interpretation. }
Collecting all action profiles, DRIVE induces the modified payoff matrix
% $$\begin{array}{c|cc}& $C$ & D \\\hline $C$ & (\hat{R},\hat{R}) & (\hat{T},\hat{S}) \\ $D$ & (\hat{S},\hat{T}) & (\hat{P},\hat{P}) \end{array}$$
\begin{table}[ht]\begin{tabular}[b]{|P{3em}|P{3em}|P{3em}|}
\hline & $C$ & $D$ \\ \hline 
$C$ & $\hat R$, $\hat R$ & $\hat S$, $\hat T$ \\ \hline 
$D$ & $\hat T$, $\hat S$ & $\hat P$, $\hat P$ \\ \hline
\end{tabular}\end{table}
with $\hat{T}>\hat{R}>\hat{P}>\hat{S}$.
Hence, if the opponent plays $C$, responding with $C$ yields $\hat R>\hat S$, and if the opponent plays $D$, responding with $C$ yields $\hat T>\hat P$. Cooperation is therefore a strict best response to both actions, so $(C,C)$ is the unique Nash equilibrium.
\\[4pt]
\textbf{Takeaway. }
Under the TD gate (Eq.~\ref{eq:mi_td}) and shaping rule (Eq.~\ref{eq:DRIVE_reward}), a single request/response exchange in a unilateral defection state deterministically maps $(\hat T,\hat S)\mapsto(\hat S,\hat T)$, eliminating greed and fear incentives in the transformed PD for that step. 
This microscopic mechanism directly underpins the equilibrium-level result that DRIVE turns the Prisoner’s Dilemma into a coordination game with cooperation as the unique Nash equilibrium, providing a worked-out complement to the proof of Theorem~\ref{theorem:drive_cooperation}.

\subsection{Analysis of Protocol Compliance Cases}\label{app:compliance}

To complement the discussion in Section~\ref{subsec:non_adherence}, we present a worked example for a two-player Prisoner’s Dilemma under DRIVE with different levels of protocol compliance.
We consider generic PD payoffs $T>R>P>S$ as in Eq.~(1); for concreteness one may think of the canonical values $(T,R,P,S)=(5,3,1,0)$, but none of the arguments below depend on these particular numbers.
In each case, agent $i$ defects while agent $j$ cooperates, and DRIVE applies reward shaping based on request/response behavior.
\begin{table}[ht]\centering
\begin{tabular}{p{0.32\linewidth}p{0.28\linewidth}p{0.28\linewidth}}
\toprule
\textbf{Case} & \textbf{Defector payoff} & \textbf{Cooperator payoff} \\
\midrule
(1) Full compliance & $S$ & $T$ \\
(2) Defector does not send request & $T$ (no penalty) & $S$ \\
(3) Cooperator withholds response & $T$ (no penalty) & $S$ \\
(4) Misreporting (false request/response) & unstable; may collapse penalties & unstable; may lead to mis-penalization \\
(5) No compliance at all & $T$ & $S$ \\
\bottomrule
\end{tabular}
\caption{Illustration of payoffs under different protocol compliance scenarios in a 2-player PD.} \label{tab:compliance_cases}
\vspace{-1em}
\end{table}
% \\[4pt]
\textbf{(1) Full compliance. }
If all agents truthfully follow the protocol, the results of Theorem~\ref{theorem:drive_cooperation} apply.
Unilateral defection $(T,S)$ is reshaped into $(S,T)$: the defector receives $S$ and the cooperator $T$, eliminating the incentive to defect.
\\[4pt]
\textbf{(2) Defector does not send a request. }
If the defector refrains from sending a request despite positive TD, it avoids penalization from neighbors but also loses the chance of mutual improvement.
Its payoff remains $T$, while the cooperator remains at $S$.
This strategy indirectly penalizes the non-requesting agent, which cannot gain from reciprocal shaping.
\\[4pt]
\textbf{(3) Cooperator withholds a response. }
If the cooperator fails to respond (intentionally or due to loss), the defector avoids the intended penalty.
The payoffs revert to $(T,S)$.
Because of the min-aggregation across neighbors, the absence of a single response only matters if it is the unique or strongest penalty.
\\[4pt]
\textbf{(4) Misreporting. }
Misreporting can occur in two forms: (i) agents send requests despite having no improvement signal (TD$<0$), or (ii) agents respond with incorrect values of $\Delta$ (due to error or intent).
In the first case, the resulting differences are typically small or positive, which the min operator filters out, making the effect negligible.
In the second case, however, false responses can distort the aggregation:
a misreporting neighbor may reduce or nullify the penalty for a defector, or unjustly penalize a compliant requester.
While this can destabilize cooperation in the worst case, it is not worse than the vulnerabilities faced by other PI methods.
Overall, misreporting represents a robustness limitation but does not fundamentally break the mechanism as long as the majority of responses remain truthful.
\\[4pt]
\textbf{(5) No compliance at all. }
If no requests or responses are exchanged, all shaping terms vanish. DRIVE collapses to plain MARL with payoffs $(T,S)$ under unilateral defection, and cooperation incentives are lost.
\\[4pt]
\textbf{Extension to larger scenarios. }
The two-player example above presents the limiting case where any single non-compliant agent corresponds to $50\%$ protocol adherence, which strongly magnifies the impact of individual deviations.
In larger populations, however, the effect of partial non-compliance is more gradual: as long as a sufficient fraction of agents continue to respond truthfully, the min-aggregation still enforces penalization of defectors.
\\[4pt]
\textbf{Three-player example. }
Consider a group of three agents, where one defects ($i$) and two cooperate ($j,k$).
Each cooperator truthfully responds with $\Delta_{i,j}, \Delta_{i,k} < 0$.
The defector’s shaped reward is $$u^{\text{DRIVE}}_i = T + \min\{\Delta_{i,j}, \Delta_{i,k}\},$$ so as long as at least one cooperator responds truthfully, $u^{\text{DRIVE}}_i$ is penalized.
If one cooperator withholds, the other still enforces the penalty, though possibly weaker.
Thus, unlike in the 2-player case, cooperation incentives do not collapse entirely when a single agent is non-compliant.

\subsection{Extension to a Simple N-Agent PD Setting}\label{app:multi_PD}

The main theorem focuses on the 2-agent Prisoner’s Dilemma (PD).  
Here we first provide a clean extension to an $N$-agent \emph{graphical} PD under full protocol compliance, where payoffs decompose into symmetric 2-player PD interactions, and then progressively relax structural and behavioral assumptions.
\begin{definition}[Graphical PD]
Let $G=(V,E)$ be an undirected graph with $|V|=N$ agents.
Each agent $i\in V$ chooses an action $a_i\in\{C,D\}$. 
For every edge $(i,j)\in E$ we associate a 2-player PD with payoffs $(R,R)$, $(P,P)$, $(T,S)$, $(S,T)$ as in Eq.~\eqref{eq:PD_inequalities}. The stage payoff of agent $i$ is
$$u_i(a) = \sum_{j\in N(i)} u_{i,j}(a_i,a_j),$$
where $N(i)$ is the neighborhood of $i$ in $G$ and $u_{i,j}$ is the PD payoff against neighbor $j$.
\end{definition}

To isolate the effect of incentive alignment, we assume that each pairwise interaction is shaped independently using the 2-agent DRIVE protocol, and that all agents comply fully with the request and response rules. 
That is, in every interaction with neighbor $j$, agent $i$ behaves exactly as in the 2-agent analysis of Appendix~\ref{app:detailed_PD}.
\begin{proposition}[Pairwise incentive alignment]\label{prop:multi_PD}
Consider a graphical PD as above and assume that each edge $(i,j)$ uses the 2-agent DRIVE protocol with shaping rule \eqref{eq:DRIVE_reward_app} applied independently to $u_{i,j}(a_i,a_j)$. 
Then, for every agent $i$ and every joint action profile $a_{-i}$ of its neighbors,
$$
u_i^{\mathrm{DRIVE}}(C,a_{-i}) > u_i^{\mathrm{DRIVE}}(D,a_{-i}),
$$
so cooperation is a dominant action for every agent and the unique Nash equilibrium of the shaped game is the all-cooperate profile $(C,\dots,C)$.
\end{proposition}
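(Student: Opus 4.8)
The plan is to exploit the additive structure of the graphical PD together with the per-edge independence assumption, so that the whole argument reduces to applying the two-agent result of Theorem~\ref{theorem:drive_cooperation} independently on each edge. Because the stage payoff is $u_i(a) = \sum_{j \in N(i)} u_{i,j}(a_i, a_j)$ and, by hypothesis, each edge is shaped separately with the 2-agent DRIVE protocol, the shaped global payoff decomposes in exactly the same way,
\begin{equation*}
u_i^{\mathrm{DRIVE}}(a) = \sum_{j \in N(i)} u_{i,j}^{\mathrm{DRIVE}}(a_i, a_j).
\end{equation*}
First I would invoke Theorem~\ref{theorem:drive_cooperation} (equivalently the single-exchange derivation in Appendix~\ref{app:detailed_PD}) on each edge $(i,j)$ in isolation. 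Under full compliance and the intended TD-gate regime, this yields, for every edge, the swapped coordination payoffs for agent $i$: it receives $R$ under $(C,C)$, $P$ under $(D,D)$, the temptation payoff $T$ when it cooperates against a defecting neighbor, and the sucker payoff $S$ when it defects against a cooperating neighbor.

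Second, I would establish dominance edge by edge. Fixing a neighbor $j$ and its action $a_j$, I compare the two options for agent $i$ in the shaped per-edge matrix. If $a_j = C$, cooperating yields $R$ against $S$ for defecting, and $R > S$ follows from $R > P > S$; if $a_j = D$, cooperating yields $T$ against $P$ for defecting, and $T > P$ follows from $T > R > P$. Hence $u_{i,j}^{\mathrm{DRIVE}}(C,a_j) > u_{i,j}^{\mathrm{DRIVE}}(D,a_j)$ for every neighbor action. Summing these strict inequalities over all $j \in N(i)$ (a nonempty set for any non-isolated $i$) gives $u_i^{\mathrm{DRIVE}}(C,a_{-i}) > u_i^{\mathrm{DRIVE}}(D,a_{-i})$ for every neighbor profile $a_{-i}$, so $C$ is a strictly dominant action for agent $i$. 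Since this holds for all agents, the shaped game is dominance-solvable and $(C,\dots,C)$ is its unique Nash equilibrium; any isolated agent is payoff-indifferent and can be fixed to $C$ without loss, so the statement is understood for graphs without isolated vertices.

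The main obstacle is not this counting argument -- which is essentially bookkeeping once Theorem~\ref{theorem:drive_cooperation} is available -- but justifying that the two-agent result may legitimately be applied edge-wise, which is precisely what the independence assumption grants. The genuine difficulty it sidesteps is the coupling introduced by the global quantities in Algorithm~\ref{algorithm:DRIVE}: the residual $\mathrm{TD}_i(\hat u_{t,i})$ and the running average $\overline{u}_i$ are computed on agent $i$'s total reward across \emph{all} neighbors, not per edge. Whether the defector on a given edge actually triggers its request ($\mathrm{TD} \ge 0$) and whether its neighbor's epoch average collapses to the per-edge worst case $S$ can then depend on the actions taken on the agent's other edges. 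I would therefore make explicit that the clean decomposition holds under the stated idealization, and note that removing it would require analyzing how the shared value function and shared running average aggregate mixed cooperate/defect profiles across a neighborhood -- the point at which the per-edge $T\leftrightarrow S$ swap is no longer guaranteed in isolation and the proof stops being a direct reduction.
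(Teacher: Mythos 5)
Your proof follows essentially the same route as the paper's: invoke Theorem~\ref{theorem:drive_cooperation} per edge to obtain strict per-edge dominance $u_{i,j}^{\mathrm{DRIVE}}(C,a_j) > u_{i,j}^{\mathrm{DRIVE}}(D,a_j)$, then sum these strict inequalities over $j \in N(i)$ to conclude that $C$ is strictly dominant and $(C,\dots,C)$ is the unique Nash equilibrium. Your two additional caveats --- excluding isolated vertices (for which the empty sum would make the strict inequality fail) and noting that the per-edge reduction is only legitimate because the independence assumption decouples the shared TD gate and epoch-average $\overline{u}_i$ across edges --- are refinements the paper leaves implicit, but the core argument is identical.
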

\begin{proof}
By Theorem~\ref{theorem:drive_cooperation} and its detailed proof in Appendix~\ref{app:detailed_PD}, for each neighbor $j$ the pairwise DRIVE-shaped payoff function $u_{i,j}^{\mathrm{DRIVE}}$ makes $C$ strictly dominate $D$ in the corresponding 2-player PD. That is, for any fixed $a_j\in\{C,D\}$, $u_{i,j}^{\mathrm{DRIVE}}(C,a_j) > u_{i,j}^{\mathrm{DRIVE}}(D,a_j)$.
Summing these strict inequalities over all neighbors $j\in N(i)$ for any joint neighbor action profile $a_{-i}$ yields 
$$
u_i^{\mathrm{DRIVE}}(C,a_{-i}) = \sum_{j\in N(i)} u_{i,j}^{\mathrm{DRIVE}}(C,a_j) > \sum_{j\in N(i)} u_{i,j}^{\mathrm{DRIVE}}(D,a_j) = u_i^{\mathrm{DRIVE}}(D,a_{-i}).
$$
Hence $C$ is a strict best response to any $a_{-i}$ for all agents $i$, so $(C,\dots,C)$ is the unique Nash equilibrium.
\end{proof}

This proposition formalizes how the 2-player incentive alignment result extends to a simple class of $N$-agent games with additive pairwise PD interactions. 
Our SSD experiments instantiate more complex interaction patterns, but the graphical PD result provides a clean theoretical illustration of how DRIVE scales beyond two agents.
\\[4pt]
\textbf{Pairwise PD interactions across neighborhoods.}
In many multi-agent systems, agents repeatedly engage in pairwise social dilemmas with dynamically changing or randomly matched partners (an \emph{interaction graph}), while incentive exchanges are restricted to a possibly different and fixed \emph{communication
graph} defined by neighborhoods $N_{t,i}$.
In such settings, the DRIVE-shaped reward of a defector $i$ still takes the form $u^{\mathrm{DRIVE}}_i = T + \min_{j \in N_{t,i}} \Delta_{i,j}$.
Whenever the exploited cooperator in a given PD round is also a neighbor of $i$, the exact $T \leftrightarrow S$ payoff swap from the 2-agent case is recovered. 
When interaction partners lie
outside $N_{t,i}$, the penalty is instead delivered by whichever compliant neighbors suffer a sustained long-run disadvantage from $i$’s behavior (e.g., through shared resources or future interactions).
Thus, the interaction graph determines \emph{which behaviors create negative externalities}, while the communication graph determines \emph{who can hold whom accountable}.
As long as each defector has at least one compliant neighbor whose long-run return is reduced when it defects, the sign of the incentive remains cooperative, even if the magnitude differs from the fully local 2-agent PD.
\\[4pt]
\textbf{Beyond graphical PDs.}
Many benchmark social dilemma environments, such as Coin Game or Harvest, do not admit an explicit decomposition into pairwise PD payoffs. 
Instead, incentives arise indirectly through shared resources, delayed consequences, and aggregate system dynamics.
Nevertheless, the same principle applies: agents whose behavior systematically reduces their neighbors’ long-run returns generate negative differences $\Delta_{i,j}$ and are penalized by DRIVE.
This perspective explains why DRIVE empirically  romotes cooperation in such environments, even though no explicit PD structure is present.
\\[4pt]
\textbf{General $N$-agent compliance and robustness.}
The preceding results assume full protocol compliance. 
We now consider a general $N$-agent setting in which only $M$ agents truthfully follow the DRIVE protocol, while $K=N-M$ agents may behave adversarially by withholding requests or responses or by misreporting.
For any game-theoretic defector $i$, effective penalization requires two conditions: 
(i) $i$ itself must comply with the request gate
(sending a request when $\mathrm{TD}\ge 0$), and 
(ii) at least one neighbor $j\in N_i$ must respond truthfully. 
Under these conditions, the shaped reward of $i$ is
$$
u^{\mathrm{DRIVE}}_i = T + \min\{\Delta_{i,j}\}_{j\in N_i,\; j\ \mathrm{compliant}}.
$$
As long as a compliant responder exists, defection is penalized; 
the severity depends on the most negative difference among the responders, but the penalty does not collapse entirely.
In fully connected populations, this reduces to the simple requirement that at least two agents comply (the requester and one responder), i.e.\ $M/N \geq 2/N$.
In general communication topologies, the compliant set must form a \emph{dominating set}, ensuring that each requester has at least one compliant neighbor.
Thus, DRIVE exhibits a \emph{graceful degradation}: as compliance $M/N$ decreases, cooperation incentives weaken smoothly but only collapse completely if no responder is available.
\\[4pt]
\textbf{Connectivity requirements.}
In fully connected graphs, penalization is ensured as soon as at least two agents comply (the requester and one responder), corresponding to a compliance rate $c \ge 2/N$. 
In general communication topologies, the compliant agents must form a \emph{dominating set}: every requester must have at least one compliant neighbor. 
Formally, for graph $G=(V,E)$ with compliant set $C\subseteq V$, penalization holds iff $\forall i\in V:\; N(i)\cap C\neq\varnothing$. 
The minimal compliant set size is the domination number $\gamma(G)$, yielding a topology-aware threshold $|C|\ge \gamma(G)$.
\\[4pt]
\textbf{Adversarial robustness.}
The above threshold can be interpreted directly in terms of $K$ adversarial agents.
If $K$ agents are adversarial (never providing truthful responses), DRIVE continues to penalize every \emph{protocol-compliant} defector as long as the remaining $M=N-K$ compliant agents contain a dominating set. 
Equivalently, the system tolerates up to $K\le N-\gamma(G)$ adversarial agents.
However, agents that are adversarial both in their \emph{actions} (always defecting) and their \emph{protocol behavior} (never sending requests) cannot be punished by any PI scheme based on self-reported signals, including DRIVE.
To compliant neighbors these agents simply appear as players whose defections generate no negative differences; this reduces local cooperative pressure and gradually pushes the system toward the underlying non-PI MARL dynamics (Case~5), but does not cause an abrupt collapse.
\\[4pt]
Overall, these results highlight DRIVE’s property of \emph{graceful degradation}: full compliance recovers strong incentive-alignment guarantees, while partial compliance weakens but does not immediately destroy cooperative incentives, provided the communication topology satisfies minimal coverage conditions.
In two-agent systems, a single non-compliant agent already reduces compliance to $50\%$, eliminating penalization.
By contrast, in larger populations ($N>2$), even if several agents defect from the protocol, the remaining majority can still enforce penalization through the min-aggregation.
Hence, DRIVE remains effective as long as the compliance rate stays above the topology-specific threshold, often close to a simple majority in practice.
Nevertheless, systematic misreporting or universal non-compliance collapses DRIVE to baseline MARL, emphasizing the importance of designing variants that are robust to missing or adversarial responses in future work.

\subsection{Affine Reward Transformations: Scope and Edge Cases}\label{app:affine_transformations}

Here, we expand on the affine reward-change model used in our analysis. In each epoch $m$, the environmental reward is transformed by a shared positive affine map $r' = c_m\, r + b_m$, applied uniformly to all agents and timesteps. The schedules $c_m$ and $b_m$ may vary arbitrarily across epochs, covering all reward dynamics used in Sec.~\ref{subsec:results_drift} (linear, exponential, stepwise, and damped–cosine).
\\[4pt]
\textbf{Example: damped–cosine schedule. }
For illustration, in the damped–cosine setting we use
$c_m = (1 - m/E)\cos^2(2\eta m)$ and $b_m = \eta$, so the effective reward scale oscillates while decaying over time.
\\[4pt]
\textbf{Degenerate epochs where $c_m \approx 0$. }
Isolated epochs may satisfy $c_m \approx 0$, yielding $r' \approx b_m$ and temporarily collapsing the Prisoner’s Dilemma inequalities ($T, R, P, S$ become indistinguishable). The stage game becomes nearly payoff-indifferent in these epochs. Our results treat such cases explicitly by applying the PD analysis only to epochs with $c_m > 0$.
\\[4pt]
\textbf{Transformations not covered. }
This affine class excludes genuinely non-linear reward changes such as clipping, saturation, sign-dependent remapping, or state-dependent transformations that cannot be written as a shared affine map. These represent a broader family of sim-to-real shifts that we leave to future work.

\subsection{Reward Changes and Invariance in the PD}\label{app:invariance_PD}

We first formalize how affine reward transformations affect the
PD inequalities.

\begin{lemma}[Affine transformations preserve PD structure]\label{lem:affine_PD}
Let $T>R>P>S$ and define $T'=cT+b$, $R'=cR+b$, $P'=cP+b$, $S'=cS+b$ for some $c>0$, $b\in\mathbb{R}$. Then $T'>R'>P'>S'$.
\end{lemma}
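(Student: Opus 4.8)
The plan is to exploit the fact that the transformation $\phi(x) = cx + b$ with $c > 0$ is a strictly increasing map of $\mathbb{R}$, and hence preserves every strict order relation among its arguments. Since all four transformed payoffs $T', R', P', S'$ are obtained by applying the \emph{same} map $\phi$ to $T, R, P, S$, it suffices to check that $\phi$ respects the ordering on each consecutive pair of the chain $T > R > P > S$ and then to concatenate the results by transitivity.

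First I would isolate the single monotonicity fact: for any reals $x > y$, multiplying through by $c > 0$ preserves the strict inequality to give $cx > cy$, and adding the constant $b$ to both sides preserves it again, giving $cx + b > cy + b$, i.e. $\phi(x) > \phi(y)$. Applying this three times along the PD chain yields $\phi(T) > \phi(R)$, $\phi(R) > \phi(P)$, and $\phi(P) > \phi(S)$, that is $T' > R'$, $R' > P'$, and $P' > S'$. Chaining these via transitivity of $>$ produces $T' > R' > P' > S'$, which is exactly the claim.

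There is no genuine obstacle here: the content of the lemma is simply that positive affine maps are order-isomorphisms of the real line. The only point requiring care is to invoke $c > 0$ rather than $c \geq 0$, since strict positivity is precisely what keeps the inequalities strict and hence preserves the PD structure; the degenerate case $c \approx 0$, where the payoffs become indistinguishable, is handled separately in Appendix~\ref{app:affine_transformations} and is explicitly excluded from this statement.
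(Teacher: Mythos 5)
Your proof is correct and takes essentially the same approach as the paper: the paper verifies each consecutive pair via the differences $T'-R' = c(T-R) > 0$, etc., which is exactly your monotonicity argument for the affine map $\phi(x)=cx+b$ stated in difference form. Both proofs hinge on $c>0$ preserving strictness while $b$ cancels, so there is no substantive difference.
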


\begin{proof}
Since $c>0$, we have $T'-R' = c(T-R) > 0$, $R'-P' = c(R-P) > 0$, and $P'-S' = c(P-S) > 0$, so $T'>R'>P'>S'$. The additive constant $b$ cancels in all pairwise differences.
\end{proof}

We can now restate Theorems~\ref{theorem:MATE_invariance} and \ref{theorem:DRIVE_invariance} more formally.

\begin{proof}[Proof of \autoref{theorem:MATE_invariance}]
In MATE, mutual cooperation is individually rational iff the global token $x>0$ satisfies $x \ge \max\{P-S,(T-R)/3\}$~\cite{phanJAAMAS2024}. 
Under the affine transformation of Definition~\ref{def:reward_change}, these thresholds become $P'-S'$ and $(T'-R')/3$, i.e., 
$$
x \ge \max\{P'-S', (T'-R')/3\}.
$$
If we fix any $x$ satisfying this condition, because $c_m$ and $b_m$ are arbitrary (subject to $c_m>0$), we can always choose $(c_m,b_m)$ such that $P'-S' > x$, e.g., by taking $c_m$ large enough.
Then the transformed threshold exceeds $x$ and the MATE payoff matrix no longer guarantees cooperation as a best response. 
Hence MATE is not invariant to such reward changes.
\end{proof}
\begin{proof}[Proof of \autoref{theorem:DRIVE_invariance}.]
Let $(R,P,T,S)$ and $(R',P',T',S')$ be two sets of PD payoffs related by any $f_{\textit{mod}}$ of Definition~\ref{def:reward_change}. 
By Lemma~\ref{lem:affine_PD}, $T>R>P>S$ implies $T'>R'>P'>S'$. 
The detailed analysis in Appendix~\ref{app:detailed_PD} shows that, under full compliance, DRIVE leaves the mutual-cooperation and mutual-defection payoffs unchanged and swaps the temptation and sucker payoffs under unilateral defection. 
Applying the same calculation to $(R',P',T',S')$ shows that the transformed payoffs obey the same pattern. 
In both cases the resulting game has $(C,C)$ as its unique Nash equilibrium, so the dominance of
cooperation is preserved. 
\end{proof}

\subsection{Training with Normalized Returns}\label{subsec:appendix_normalization}

In our paper, $\hat{\pi}_{i}$ and $\hat{V}_{i}$ are trained with \emph{normalized returns} $\overline{G}_{t,i} = \frac{1}{\sigma_{i}}(G_{t,i} - \mu_{i})$, where $\mu_{i}$ and $\sigma_{i}$ are the mean and standard deviation of all returns $G_{t,i}$ of agent $i$ in an epoch $m$, to improve training stability \cite{van2016learning}. The normalization makes standard policy gradient RL (without PI mechanism) invariant to reward changes, where rewards are scaled by a factor $c_m > 0$ or shifted by a scalar $b_m \in \mathbb{R}$, according to Lemma~\ref{lemma:scaling_invariance}.

\begin{lemma}\label{lemma:scaling_invariance}
Standard policy gradient RL (without PI mechanism), according to Eq. \ref{eq:policy_gradients}, is invariant to reward changes, where the original environmental reward is scaled by a factor $c_m > 0$ or shifted by a scalar $b_m \in \mathbb{R}$, when the obtained returns $G_{t,i}$ in an epoch $m$ are normalized to zero mean and standard deviation such that $\overline{G}_{t,i} = \frac{1}{\sigma_{i}}(G_{t,i} - \mu_{i})$.
\end{lemma}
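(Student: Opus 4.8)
The plan is to show that the quantity actually consumed by the policy gradient in Eq.~\eqref{eq:policy_gradients} --- namely the per-epoch normalized return $\overline{G}_{t,i}$ --- is literally unchanged under the reward map $u_{t,i}\mapsto c_m u_{t,i}+b_m$ with $c_m>0$. Since Eq.~\eqref{eq:policy_gradients} depends on the rewards only through the (normalized) return that weights $\nabla_{\phi_i}\log\hat\pi_i$, invariance of $\overline{G}_{t,i}$ transfers immediately to the gradient estimate on any fixed batch of trajectories, and hence to the induced learning dynamics.

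First I would push the affine reward transformation through the definition of the return. For pure scaling this is immediate, $G'_{t,i}=\sum_k\gamma^k c_m u_{t+k,i}=c_m G_{t,i}$; for the additive part the discounted sum of the constant $b_m$ contributes an offset $d$ (equal to $b_m/(1-\gamma)$ in the infinite-horizon / stationary regime underlying the per-epoch normalization), so in general $G'_{t,i}=c_m G_{t,i}+d$. Next I would propagate this affine relation to the epoch statistics: because the empirical mean is affine-equivariant while the empirical standard deviation is scale-equivariant and shift-invariant, the transformed statistics satisfy $\mu'_i=c_m\mu_i+d$ and $\sigma'_i=|c_m|\sigma_i=c_m\sigma_i$, where $c_m>0$ lets me drop the absolute value. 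Substituting into the normalization definition, the factor $c_m$ cancels between numerator and denominator and the offset $d$ cancels between $G'_{t,i}$ and $\mu'_i$:
\[
\overline{G}'_{t,i}=\frac{1}{\sigma'_i}\bigl(G'_{t,i}-\mu'_i\bigr)=\frac{1}{c_m\sigma_i}\bigl(c_m G_{t,i}+d-c_m\mu_i-d\bigr)=\frac{1}{\sigma_i}\bigl(G_{t,i}-\mu_i\bigr)=\overline{G}_{t,i}.
\]
I would then close the argument by noting that the value baseline $b_i(s_t)$ is fit to these same invariant normalized returns, so $b_i$ and therefore the advantage $\overline{G}_{t,i}-b_i(s_t)$ are invariant as well, leaving the entire update of Eq.~\eqref{eq:policy_gradients} unchanged.

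I expect the main obstacle to be the additive term: in a finite-horizon discounted setting the constant $b_m$ accrues an offset $b_m(1-\gamma^{H-t})/(1-\gamma)$ that depends on the remaining horizon $H-t$, so the collection of returns normalized together is not transformed by a perfectly uniform affine map. I would handle this by observing that the scaling invariance is exact and $t$-independent, and that the residual horizon-dependence of the shift either vanishes (infinite-horizon/stationary regime, where $d$ is a genuine constant and the clean cancellation above applies) or is absorbed by the state-dependent baseline $b_i(s_t)$, which shifts by exactly the same per-step offset and thus leaves the advantage invariant. Making precise which of these justifications the per-epoch normalization scheme relies on is the single place where the argument needs care rather than routine algebra.
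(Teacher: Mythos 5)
Your proof follows essentially the same route as the paper's: push the positive affine reward map through the returns, use affine-equivariance of the epoch mean and scale-equivariance (with $c_m>0$) of the epoch standard deviation, and cancel to obtain $\overline{G}'_{t,i}=\overline{G}_{t,i}$, hence an unchanged gradient estimate. You are in fact more careful than the paper on the one delicate point: the paper simply treats the discounted shift as a $t$-independent constant $B_m=\sum_{t=0}^{H-1}\gamma^{t}b_m$, silently assuming away the remaining-horizon dependence $b_m(1-\gamma^{H-t})/(1-\gamma)$ that you explicitly flag and patch via the infinite-horizon/stationary regime or baseline absorption.
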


\begin{proof}
Since $c_m > 0$, the normalized return $\overline{G}_{t,i}$ is unaffected by the scaling and the shift with any $B_m = \sum^{H-1}_{t=0}\gamma^t b_m \in \mathbb{R}$:
\begin{align}
\overline{\sigma}^2_{i} = \frac{1}{H-1}\sum^{H-1}_{t=0}(c_m(G_{t,i} + B_m) - c_m(\mu_{i} + B_m))^2 = \frac{c_m^2}{H-1}\sum^{H-1}_{t=0}(G_{t,i} - \mu_{i})^2 = (c_m\sigma_{i})^2
\end{align}
\begin{align}
\frac{1}{c_m\sigma_{i}}(c_m(G_{t,i} + B_m) - c_m(\mu_{i} + B_m)) = \frac{1}{\sigma_{i}}(G_{t,i} - \mu_{i}) = \overline{G}_{t,i}
\end{align}
\\[4pt]
Therefore, any reward change to $G_{t,i}$ via scaling and shifting does not affect the normalized $\overline{G}_{t,i}$.
\end{proof}

According to Lemma \ref{lemma:scaling_invariance}, the non-PI baselines \textit{Naive Learning}, \textit{LOLA-PG}, and \textit{POLA-DiCE} are not affected by the dynamic reward changes in Section \ref{subsec:results_drift} and Fig. \ref{fig:drift_functions} (but are still less cooperative than \textit{DRIVE}).
However, most state-of-the-art PI methods like \textit{LIO} and \textit{MATE} cannot accommodate such changes (by the generally unknown function $f_{\textit{mod}}$) in their payoff modifications (Fig. \ref{fig:payoff_tables_pi}) and thus fail to adapt, in contrast to \textit{DRIVE}, as shown in Figs. \ref{fig:intro_figure} and \ref{fig:drift_functions}.

\subsection{Invariance in SSDs Under Normalized Policy Gradients}
\label{app:invariance_SSD}
We can restate and prove Theorem~\ref{theorem:DRIVE_invariance_SSDs} in more detail using the normalized return defined in \autoref{lemma:scaling_invariance}
$$
\bar{G}_{t,i} = \frac{G_{t,i} - \mu_i}{\sigma_i},
$$
where $\mu_i$ and $\sigma_i$ are the mean and standard deviation of all returns $G_{t,i}$ for agent $i$ in epoch $m$.

\begin{lemma}[Scaling-invariance of normalized policy gradients]
Let rewards in epoch $m$ be transformed by $r'_{t,i} = c_m r_{t,i} + b_m$ with $c_m>0$, and let $G'_{t,i}$, $\mu'_i$, and $\sigma'_i$ be the corresponding returns, mean, and standard deviation. Then the normalized
returns coincide, $\bar{G}'_{t,i} = \bar{G}_{t,i}$, and thus the policy-gradient estimate in Eq.~\eqref{eq:policy_gradients} is unchanged.
\end{lemma}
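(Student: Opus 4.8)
The plan is to exploit the linearity of the discounted return in the per-step rewards, which immediately converts the affine reward map into an affine map on returns, and then observe that the two free parameters of an affine transform — a positive scale and an additive offset — are exactly the two degrees of freedom that per-epoch mean/variance normalization removes. Concretely, I would first write $G'_{t,i} = \sum_{k\ge 0}\gamma^k r'_{t+k,i} = \sum_{k\ge 0}\gamma^k (c_m r_{t+k,i} + b_m)$ and split the sum, obtaining $G'_{t,i} = c_m G_{t,i} + b_m B$, where $B = \sum_{k\ge 0}\gamma^k$ is a fixed geometric constant determined only by $\gamma$ (and the horizon $H$). This is the crucial reduction: the transformed return is an affine image of the original return with the same scale $c_m$ and a common additive offset $b_m B$ shared by every sample in the epoch.

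Next I would propagate this affine relation through the epoch statistics. Taking the empirical mean over the epoch gives $\mu'_i = c_m \mu_i + b_m B$, while the empirical standard deviation is invariant to the additive offset and scales by the positive factor, so $\sigma'_i = |c_m|\,\sigma_i = c_m \sigma_i$ using $c_m > 0$. Substituting into the normalized return yields
\begin{equation*}
\bar G'_{t,i} = \frac{G'_{t,i} - \mu'_i}{\sigma'_i} = \frac{c_m G_{t,i} + b_m B - (c_m \mu_i + b_m B)}{c_m \sigma_i} = \frac{G_{t,i} - \mu_i}{\sigma_i} = \bar G_{t,i},
\end{equation*}
so the offset $b_m B$ cancels in the numerator and the scale $c_m$ cancels between numerator and denominator. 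This is exactly the computation already carried out in the proof of Lemma~\ref{lemma:scaling_invariance}; here it is simply tracked through the explicit primed quantities $G'_{t,i}, \mu'_i, \sigma'_i$.

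Finally, to transfer invariance of $\bar G_{t,i}$ to invariance of the update in Eq.~\eqref{eq:policy_gradients}, I would note that under normalized training the return entry is replaced by $\bar G_{t,i}$ and the baseline $b_i(s_t)$ is the value estimate $\hat V_i$ fitted to these same normalized returns. Since $\bar G'_{t,i} = \bar G_{t,i}$, the regression targets for $\hat V_i$ are unchanged, hence the fitted baseline is unchanged, and the advantage $\bar G_{t,i} - \hat V_i(\tau_{t,i})$ together with the score $\nabla_{\phi_i}\log\hat\pi_{i}(a_{t,i}\mid\tau_{t,i})$ — which has no dependence on the reward scale — are identical before and after the transformation. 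Therefore the gradient estimate $g$ coincides.

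I expect the only genuine subtlety to be the treatment of the additive constant $b_m B$: with per-timestep returns the remaining horizon shrinks as $t$ grows, so strictly the offset is $b_m B_t$ with $B_t = \sum_{k=0}^{H-1-t}\gamma^k$ depending on $t$ rather than a single $B$. The clean resolution is to observe that any such deterministic, sample-independent time offset is precisely what a state-/time-dependent baseline absorbs without altering the expected policy gradient; equivalently, if normalization is taken over the start-state returns $G_{0,i}$ the offset is a true constant and the cancellation is exact. I would flag this as the one point requiring care, matching the convention $B_m = \sum_{t=0}^{H-1}\gamma^t b_m$ used in Lemma~\ref{lemma:scaling_invariance}, and leave the remainder as the routine algebra above.
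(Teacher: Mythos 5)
Your proof is correct and follows essentially the same route as the paper's: write $G'_{t,i}=c_mG_{t,i}+b_mB$, propagate the affine map through the epoch statistics ($\mu'_i=c_m\mu_i+b_mB$, $\sigma'_i=c_m\sigma_i$), and cancel in the normalized return, then note the gradient estimator is unchanged. You are in fact slightly more careful than the paper, whose proof asserts the offset $B_m=\sum_{k=0}^{H-1}\gamma^k b_m$ is independent of $t$ without comment, whereas you correctly flag the shrinking-horizon dependence of the additive term and give the standard resolutions (baseline absorption, or normalizing start-state returns).
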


\begin{proof}
As shown in Appendix~\ref{subsec:appendix_normalization}, the transformed returns satisfy $G'_{t,i} = c_m G_{t,i} + B_m$ with $B_m = \sum_{k=0}^{H-1} \gamma^k b_m$ independent of $t$.
Therefore $\mu'_i = c_m \mu_i + B_m$ and $\sigma'_i = c_m \sigma_i$. The normalized return becomes
$$
\bar{G}'_{t,i}=\frac{G'_{t,i} - \mu'_i}{\sigma'_i}=\frac{c_m G_{t,i} + B_m - (c_m \mu_i + B_m)}{c_m \sigma_i}=\frac{G_{t,i} - \mu_i}{\sigma_i}=\bar{G}_{t,i}.
$$
Substituting $\bar{G}'_{t,i}$ into the policy gradient estimator Eq.~\eqref{eq:policy_gradients} yields the same gradient as for $\bar{G}_{t,i}$.
\end{proof}

We now combine this with the structure of DRIVE.

\begin{proof}[Proof of \autoref{theorem:DRIVE_invariance_SSDs}.]
Let the environmental rewards in epoch $m$ be transformed by $f_{\textit{mod}}(r_{t,i},m)=c_m r_{t,i}+b_m$ with $c_m>0$. 
For any agent $i$, the DRIVE shaping rule in Eq.~\eqref{eq:DRIVE_reward_app} depends only on $\hat{u}_{t,i}$ and differences of the form $\Delta_{t,i,j} = \bar{u}_j - \hat{u}_{t,i}$. 
Under $f_{\textit{mod}}$ these become
$$
\Delta'_{t,i,j}=\bar{u}'_j - \hat{u}'_{t,i}=(c_m \bar{u}_j + b_m) - (c_m \hat{u}_{t,i} + b_m)=c_m (\bar{u}_j - \hat{u}_{t,i})=c_m \Delta_{t,i,j}.
$$
The minima in Eq.~\eqref{eq:DRIVE_reward_app} therefore scale by the same factor $c_m$, and the shaped reward transforms as 
$$
u^{\mathrm{DRIVE}\,'}_{t,i}=c_m u^{\mathrm{DRIVE}}_{t,i} + b_m,
$$
i.e., $u^{\mathrm{DRIVE}}_{t,i}$ is itself subjected to the same per-epoch positive affine map. 
Applying Lemma~\ref{lemma:scaling_invariance} to $u^{\mathrm{DRIVE}}_{t,i}$ instead of $u_{t,i}$ shows that the normalized returns (and thus the policy updates) are unchanged.
Hence both the baseline policy-gradient dynamics and the incentive effects induced by DRIVE are invariant to per-epoch positive affine transformations, as claimed.
\end{proof}

\section{Technical Details and Hyperparameters}\label{app:appendix_technical}

\subsection{Hyperparameters}\label{subsec:hyperparameters}

All common hyperparameters used by all MARL approaches in the experiments, as reported in Section \ref{sec:results} in the paper, are listed in Table \ref{tab:common_hyperparameters}. The final values were chosen based on a coarse grid search which finds a tradeoff between performance and computation w.r.t. \textit{DRIVE}, \textit{LIO}, \textit{IA}, and \textit{MATE} in \textit{IPD} and \textit{Coin-2} without dynamically changing rewards, i.e., $f_{\textit{mod}}(u,m) = u$. We directly adopted the final values in Table \ref{tab:common_hyperparameters} for all other domains. 
All training and test runs were performed in parallel on a computing cluster of 15 x86\_64 GNU/Linux (Ubuntu 18.04.5 LTS) machines with i7-8700 @ 3.2GHz CPU (8 cores) and 64 GB RAM. We did not use any GPU in our experiments.

\begin{table*}[!ht]
\centering
\caption{Common hyperparameters and their respective final values used by all algorithms evaluated in the paper. We also list the numbers that have been tried during the development of the paper.}
\begin{tabular}{|L{1.75cm}|L{2.75cm}|L{2.85cm}|L{8.5cm}|} \hline
Hyperparam. & Final Value & Numbers/Range & Description\\ \hline\hline
$K$ & $10$ & \{$1$, $5$, $10$, $20$\} & Number of episodes per epoch.\\\hline
$E$ & $4000$ & \{$2000$, $4000$, $8000$\} & Number of epochs. $E$ was gradually increased to assess the stability of the learning progress until convergence. \\\hline
$\alpha$ & $0.001$ & \{$0.001$\} & Learning rate. We used the default value of ADAM in \texttt{torch} without further tuning.\\\hline
Clip norm & $1$ & \{$1$,$\infty$\} & Gradient clipping parameter. Using a clip norm of $1$ leads to better performance than disabling it with $\infty$.\\\hline
$\lambda$ & $1$ & \{$0$, $1$\} & Trace parameter for TD($\lambda$) critic learning.\\\hline
$\gamma$ & $0.95$ (\textit{IPD}, \textit{Coin-2}, \textit{Coin-4}) $0.99$ (\textit{Harvest-12}) & \{$0.9$, $0.95$, $0.99$\} & Discount factor for the return $G_{t,i}$. Any value $\geq 0.95$ would be sufficient.\\ \hline
$|\tau_{t,i}|$ & $1$ & \{$1$, $5$, $10$\} & Local history length. It was set to 1 to reduce computation because the other values did not significantly improve performance.\\ \hline
\end{tabular}\label{tab:common_hyperparameters}
\end{table*}

For \textit{LIO}, we set the cost weight for incentive function learning to 0.001 and the maximum incentive value $R_{\textit{max}}$ to the highest absolute penalty per domain (3 in \textit{IPD}, 2 in \textit{Coin-2} and \textit{Coin-4}, and 0.25 in \textit{Harvest-12}), as suggested in \cite{yang2020learning}.
For \textit{IA}, we set $\alpha = 5$ and $\beta = 0.05$, as suggested in \cite{hughes2018inequity}.

\subsection{Neural Network Architectures}
We coarsely tuned the neural network architectures in the paper w.r.t. performance and computation by varying the number of units per hidden layer \{$32$, $64$, $128$\} for $\hat{\pi}_{i}$ and $\hat{V}_{i}$. The number of hidden layers was varied between 1, 2, and 3, but significantly deeper architectures led to deteriorating performance (possibly due to vanishing gradients). Using ELU or ReLU activation did not make any significant difference for any neural network. Thus, we stick to ELU throughout the experiments. \textit{DRIVE}, \textit{MATE}, and \textit{Naive Learning} only use $\hat{\pi}_{i}$ and $\hat{V}_{i}$ as neural networks.
The incentive function network of \textit{LIO} has the same hidden layer architecture as $\hat{\pi}_{i}$ and $\hat{V}_{i}$. In addition, the joint action of the $n-1$ other agents is concatenated to the flattened observations before being input into the incentive function, which outputs an $n-1$ dimensional vector. The output vector is passed through a sigmoid function and multiplied with $R_{\textit{max}}$ (Section \ref{subsec:hyperparameters}) afterward.

\subsection{Coin-2 and Coin-4}
We adopted the setup of \cite{foerster2018learning} in \textit{Coin-2}, as shown in Fig. \ref{fig:coin-environment1}, with the same rules and reward functions. The order of executed actions is randomized such that situations where two agents simultaneously step on a blue coin lead to an expected payoff of +1 for the red agent and -1 for the blue agent (Fig. \ref{fig:coin-environment1} left), and vice versa for a red coin. In addition, we extended the domain to 4 agents in \textit{Coin-4} (Fig. \ref{fig:coin-environment1} right).
All agents are able to move freely, and grid cell positions can be occupied by multiple agents. Any attempt to move out of bounds is treated as ``do nothing" action.
\\[4pt]
\textbf{Prisoner's Dilemma Connection}
Assuming that an agent always picks up a coin at every time step in \textit{Coin-2}, the expected rewards per time step represent a PD instance, according to Section \ref{subsec:social_dilemmas} and Fig. \ref{fig:social_dilemma_matrix}:
\begin{itemize}
\item If both agents cooperate, i.e., only collect their matching coins, each agent gets $R = 0.5(1 + 0) = 0.5$ on average.
\item If both agents defect, i.e., collect any coin regardless of their color, each agent gets $P = 0.25(1 + 1 - 2 + 0) = 0$ on average.
\item If one agent defects, the defecting agent gets $T = 0.5(1 + 0) + 1 = 1.5$ on average (its own coin 100\% of the time and the other's coin 50\% of the time). The exploited agent gets $S = 0.5(1 - 2) = -0.5$
\end{itemize}
The expected payoffs $R$, $P$, $T$, and $S$ satisfy the characteristic PD inequalities w.r.t. greed and fear, namely $T > R > P > S$. Thus, DRIVE also works well in \textit{Coin-2} and \textit{Coin-4} (which extends the SD to 4 agents), even with changing rewards, as shown in Fig. \ref{fig:balance_results_star_drift}.

\begin{figure}[ht]\centering
\subfloat[Coin Domain\label{fig:coin-environment1}]{\includegraphics[width=0.4\textwidth]{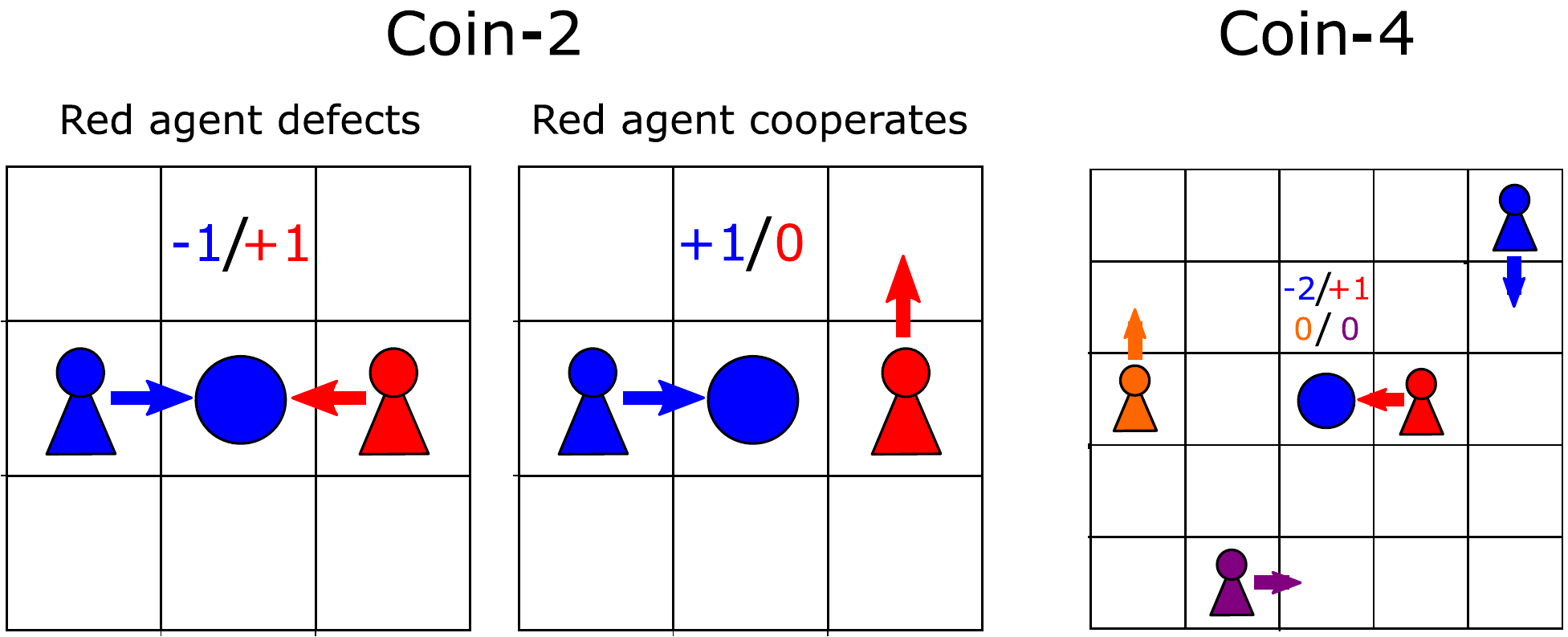}}\hspace{2em}
\subfloat[Payoff Table for \textit{Coin-2} (per time step)\label{fig:coin-payoffs}]{
\begin{tabular}[b]{|P{4em}|P{5em}|P{5em}|}
\hline & $C$ & $D$ \\ \hline 
$C$ & $0.5$, $0.5$ & $-0.5$, $1.5$ \\ \hline 
$D$ & $1.5$, $-0.5$ & $0$, $0$ \\ \hline
\end{tabular}}
\caption{\textbf{(a)} \textit{Coin-2} and \textit{Coin-4} as used in the paper. \textbf{(b)} Payoff table for \textit{Coin-2} w.r.t. the expected rewards per time step.}\label{fig:coin_domain_appendix}
\Description{\textbf{(a)} \textit{Coin-2} and \textit{Coin-4} as used in the paper. \textbf{(b)} Payoff table for \textit{Coin-2} w.r.t. the expected rewards per time step.}
\end{figure}

\subsection{Harvest-12}
We adopted the setup of \cite{perolat2017multi} in \textit{Harvest-12}, as shown in Fig. \ref{fig:harvest-environment1}, with the same dynamics and apple regrowth rates. \textit{Harvest-12} always starts with the initial apple configuration in Fig. \ref{fig:harvest-environment1} with randomly positioned agents.
The agents are able to observe the environment around their $7 \times 7$ area and have no specific orientation. Thus, they have 4 separate actions to tag all neighbor agents, which are either north, south, west, or east of them.
All agents are able to move freely, and grid cell positions can be occupied by multiple agents. Any attempt to move out of bounds is treated as ``do nothing" action.
\\[4pt]
\textbf{Prisoner's Dilemma Connection }
We empirically tested a 2-agent instance of Harvest to determine the payoff table, similar to \cite{leibo2017multi}. The table is shown in Fig. \ref{fig:harvest-payoffs} and represents a PD instance, according to Section \ref{subsec:social_dilemmas} and Fig. \ref{fig:social_dilemma_matrix}:
\begin{itemize}
\item If both agents cooperate, i.e., no stunning and occasional waiting for apple regrowth, each agent gets $R \approx 0.3$ on average.
\item If both agents defect, i.e., stunning and complete harvesting that prevents apple regrowth, each agent gets $P \approx 0.06$ on average.
\item If one agent defects, i.e., stunning and harvesting, the defecting agent gets $T \approx 0.9$, and the exploited agent gets $S \approx -0.01$ on average (due to being stunned most of the time with a 0.01 time penalty).
\end{itemize}
The expected payoffs $R$, $P$, $T$, and $S$ satisfy the characteristic PD inequalities w.r.t. greed and fear, namely $T > R > P > S$. Thus, DRIVE also works well in \textit{Harvest-12} (which extends the SD to 12 agents), even with changing rewards, as shown in Fig. ~\ref{fig:balance_results_star_drift} and Fig .~\ref{fig:star_results_alternatives2}. This also confirms the empirical result of \cite{leibo2017multi}, stating that Harvest can represent a PD instance w.r.t. greed and fear, depending on the availability and regrowth rate of apples.

\begin{figure}[ht]\centering
\subfloat[Harvest Domain\label{fig:harvest-environment1}]{\includegraphics[width=0.4\textwidth]{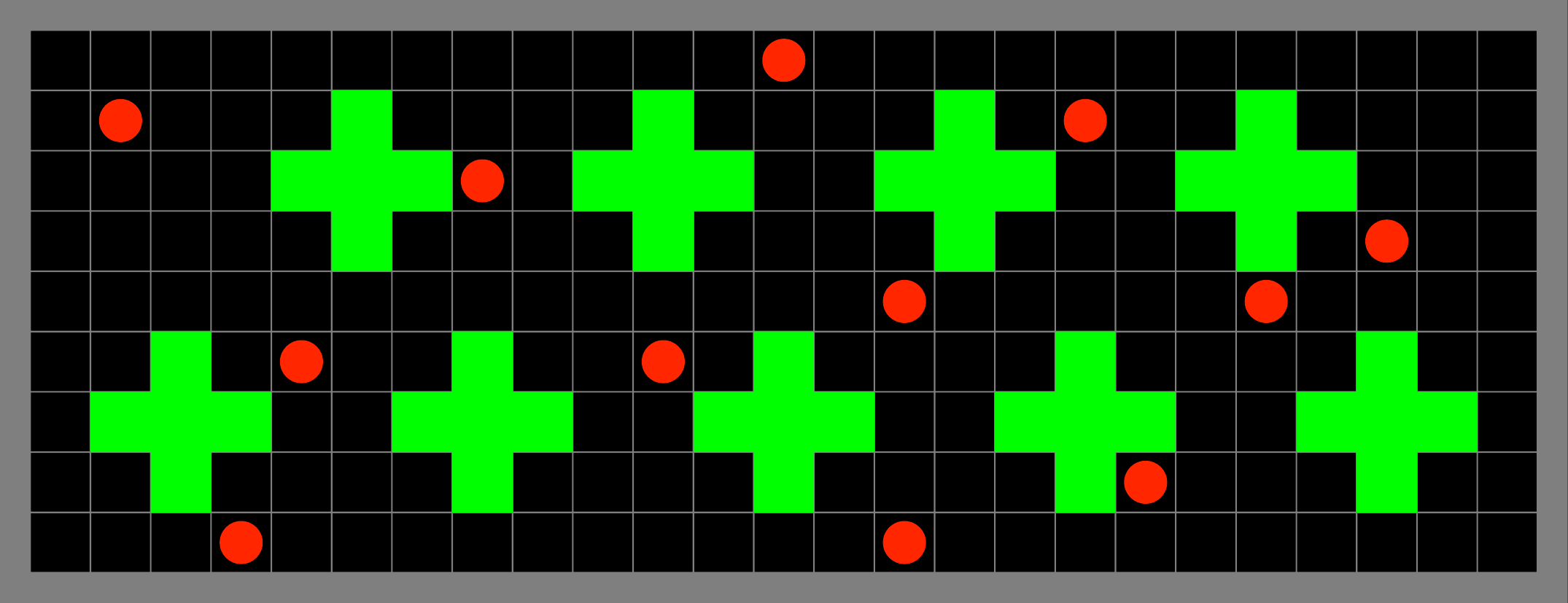}}
\hspace{2em}
\subfloat[Payoff Table for \textit{Harvest-2} (per time step)\label{fig:harvest-payoffs}]{
\begin{tabular}[b]{|P{4em}|P{5em}|P{5em}|}
\hline & $C$ & $D$ \\ \hline 
$C$ & $0.3$, $0.3$ & $-0.01$, $0.9$ \\ \hline 
$D$ & $0.9$, $-0.01$ & $0.06$, $0.06$ \\ \hline
\end{tabular}}
\caption{\textbf{(a)} Domain layout with the initial apple configuration used for \textit{Harvest-12}. \textbf{(b)} Payoff table for \textit{Harvest-2} w.r.t. the expected rewards per time step.}\label{fig:harvest_domain_appendix}
\Description{\textbf{(a)} Domain layout with the initial apple configuration used for \textit{Harvest-12}. \textbf{(b)} Payoff table for \textit{Harvest-2} w.r.t. the expected rewards per time step.}
\end{figure}

\section{Additional Results}\label{sec:additional_results}
We provide additional results regarding different cooperation measures for \textit{Harvest-12}, namely \emph{Social Welfare or Efficiency (U)} (according to Section \ref{subsec:problem_formulation}), \emph{Equality (E)} (1 minus the Gini coefficient),  \emph{Sustainability (S)} (the average number of time steps at which apples are collected), and \emph{Peace (P)} (the average number of untagged agents at any time step).
All measures are based on the \emph{original environmental rewards} $u_{t,i}$ instead of $\hat{u}_{t,i} = f_{\textit{mod}}(u_{t,i}, m)$ (Algorithm \ref{algorithm:MARL_DRIVE}, Line 15) to reliably assess the stability of cooperation.
They are defined by \cite{perolat2017learning}:

\begin{align}
% U &= \sum_{i \in \mathcal{D}}\sum_{t=0}^{H-1} u_{t,i}, \\
E &= 1 - \frac{\sum_{i \in \mathcal{D}}\sum_{j \in \mathcal{D}}|\sum_{t=0}^{H-1} (u_{t,i}-u_{t,j})|}{2n\sum_{i \in \mathcal{D}}\sum_{t=0}^{H-1} u_{t,i}}, \\
S &= \frac{1}{n}\sum_{i \in \mathcal{D}}\chi_{i} \text{, where } \chi_i = \mathbb{E}[t| u_{t,i} > 0], \\
P &= n -  \frac{1}{H}\sum_{i \in \mathcal{D}}\sum_{t = 1}^{H}\mathbf{1}[\text{agent timed-out on time step }t]
\end{align}

Fig. \ref{fig:star_results_alternatives2} shows the progress of the alternative cooperation measures in \textit{Harvest-12} with the different reward change functions $f_{\textit{mod}}$ (I, II, III, and IV) from Section \ref{subsec:results_drift}. 
While DRIVE is robust w.r.t. all measures, MATE and LIO significantly change under most settings, except for equality regarding reward change functions $f^{I}_{\textit{mod}}$ and $f^{III}_{\textit{mod}}$. 
IA is only affected by $f^{IV}_{\textit{mod}}$ regarding all cooperation measures. 
Apart from $f^{III}_{\textit{mod}}$, MATE always exhibits more peace than DRIVE, although the peace level significantly varies depending on the reward change function. 
LIO only exhibits more peace than DRIVE under $f^{II}_{\textit{mod}}$ and $f^{IV}_{\textit{mod}}$, which both converge to zero over time.

\begin{figure}[htb]\centering
\includegraphics[width=0.82\textwidth]{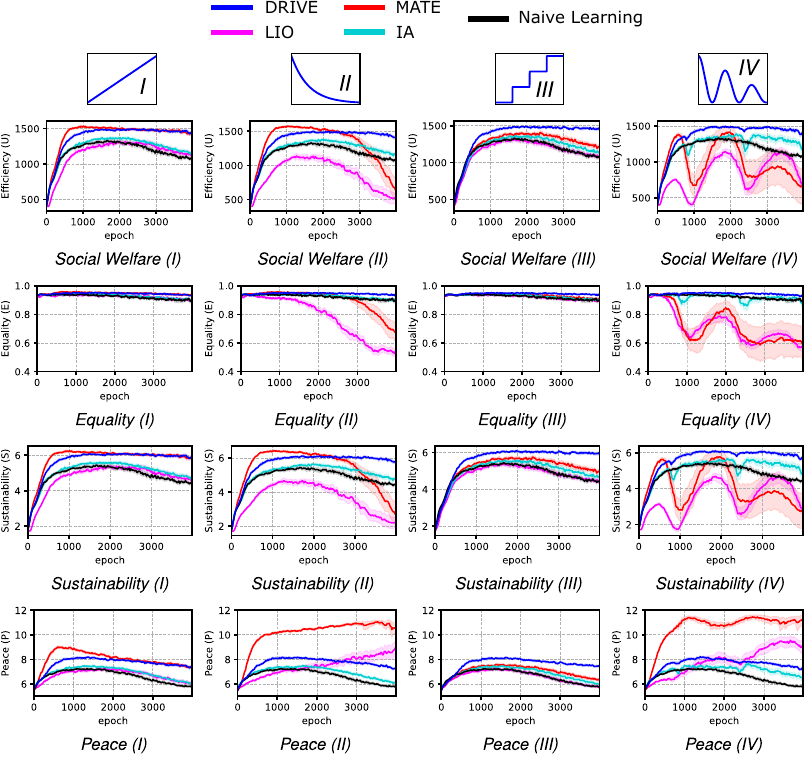}
\caption{Average progress of DRIVE and other baselines in \textit{Harvest-12} with different reward change functions $f_{\textit{mod}}$ (I, II, III, and IV), as illustrated in the legend at the top and Fig. \ref{fig:drift_functions} regarding the cooperation measures \emph{Social Welfare or Efficiency (U)}, \emph{Equality (E)},  \emph{Sustainability (S)}, and \emph{Peace (P)} \cite{perolat2017learning}. Shaded areas show the 95\% confidence interval.}\label{fig:star_results_alternatives2}
\Description{Average progress of DRIVE and other baselines in \textit{Harvest-12} with different reward change functions $f_{\textit{mod}}$ (I, II, III, and IV), as illustrated in the legend at the top and Fig. \ref{fig:drift_functions} regarding the cooperation measures \emph{Social Welfare or Efficiency (U)}, \emph{Equality (E)},  \emph{Sustainability (S)}, and \emph{Peace (P)} \cite{perolat2017learning}. Shaded areas show the 95\% confidence interval.}
\end{figure}

%%%%%%%%%%%%%%%%%%%%%%%%%%%%%%%%%%%%%%%%%%%%%%%%%%%%%%%%%%%%%%%%%%%%%%%%

\end{document}